\DeclareMathAlphabet{\mathcalligra}{T1}{calligra}{m}{n}
\DeclareFontShape{T1}{calligra}{m}{n}{<->s*[1.1]callig15}{}
\crefname{equation}{}{}
\pgfplotsset{compat=1.14}
\newsavebox{\measure@tikzpicture}
\newcommand{\revised}[1]{#1}
 \renewcommand{\v}[1]{\ensuremath{\mathbf{#1}}} \newcommand{\gv}[1]{\ensuremath{\mbox{\boldmath$ #1 $}}}
 \newcommand{\scpd}[3][]{\left< #2, #3 \right>_{#1}}     \newcommand{\grad}[1]{\gv{\nabla} #1}  \renewcommand{\div}[1]{\gv{\nabla} \cdot #1}  \newcommand{\curl}[1]{\gv{\nabla} \times #1} \let\baraccent=\= \renewcommand{\=}[1]{\stackrel{#1}{=}}
\theoremstyle{definition}
\theoremstyle{remark}
\newcommand{\integral}[4]{\int\limits_{#1}^{#2} #3 \,\mathrm{d}#4}
\newcommand{\integrald}[4]{\int_{#1}^{#2} #3 \,\mathrm{d}#4}
\newcommand{\reals}{\mathbb{R}}
\newcommand\restr[2]{{		\left.\kern-\nulldelimiterspace 		#1 		\vphantom{\big|} 		\right|_{#2} }}
\newcommand{\identity}{I}
\newcommand{\trace}[1]{\mathrm{tr}(#1)} \newcommand{\stress}{\bm{\sigma}}
\newcommand{\strain}{\bm{\varepsilon}}
\newcommand{\normal}{\mathbf{n}}
\newcommand{\ltwonorm}[1]{\|{#1}\|_2}
\newcommand{\eucnorm}[1]{\|{#1}\|_2}
\newcommand{\ttme}{\texttt{me}}
\newcommand{\ttmnw}{\texttt{mnw}}
\newcommand{\ttmn}{\texttt{mn}}
\newcommand{\ttts}{\texttt{ts}}
\newcommand{\tttse}{\texttt{tse}}
\newcommand{\tttw}{\texttt{tw}}
\newcommand{\tttc}{\texttt{tc}}
\newcommand{\ttbc}{\texttt{bc}}
\newcommand{\ttbe}{\texttt{be}}
\newcommand{\ttbnw}{\texttt{bnw}}
\newcommand{\ttbn}{\texttt{bn}}
\newcommand{\ttms}{\texttt{ms}}
\newcommand{\ttmse}{\texttt{mse}}
\newcommand{\ttmw}{\texttt{mw}}
\newcommand{\tet}{T}
\newcommand{\microtet}{t}
\newcommand{\localstiff}{a}
\newcommand{\stencil}{S}
\newcommand{\refstencil}{\hat{S}}
\newcommand{\corrterm}{R}
\newcommand{\bfe}{\bm{e}}
\newcommand{\bfx}{\bm{x}}
\newcommand{\bfb}{\bm{b}}
\newcommand{\mesh}{\mathcal{T}}
\newcommand{\coeff}{k}
\tikzset{->-/.style={decoration={
			markings,
			mark=at position #1 with {\arrow{>}}},postaction={decorate}}}
\theoremstyle{remark}
\newtheorem{remark}[theorem]{Remark}
\newcommand{\sff}{\mathsf{f}}
\newcommand{\sfk}{\mathsf{k}}
\newcommand{\sfp}{\mathsf{p}}
\newcommand{\sfu}{\mathsf{u}}
\newcommand{\sfx}{\mathsf{x}}
\newcommand{\sfy}{\mathsf{y}}
\newcommand{\sfA}{\mathsf{A}}
\newcommand{\sfB}{\mathsf{B}}
\newcommand{\sfC}{\mathsf{C}}
\newcommand{\sfzero}{\mathsf{0}}
\definecolor{color1}{rgb}{0, 0.4470, 0.7410}
\definecolor{color2}{rgb}{0.8500, 0.3250, 0.0980}
\definecolor{color3}{rgb}{0.9290, 0.6940, 0.1250}
\definecolor{color4}{rgb}{0.4940, 0.1840, 0.5560}
\definecolor{color5}{rgb}{0.4660, 0.6740, 0.1880}
\definecolor{color6}{rgb}{0.3010, 0.7450, 0.9330}
\definecolor{color7}{rgb}{0.6350, 0.0780, 0.1840}
\DeclareMathOperator*{\argmin}{arg\,min}
\newcommand{\TheTitle}{Stencil scaling for vector-valued PDEs on hybrid grids with applications to generalized Newtonian fluids}
\newcommand{\TheAuthors}{Daniel~Drzisga, Ulrich~R\"ude, and Barbara~Wohlmuth}
\title{{\TheTitle}\thanks{Submitted to the editors \today.
			}}
\author{
Daniel~Drzisga\thanks{Institute for Numerical Mathematics (M2), Technische
	Universit{\"a}t M{\"u}nchen}
\and Ulrich~R\"ude\thanks{Dept.~of Computer Science 10,
	Friedrich-Alexander-Universit{\"a}t Erlangen-N{\"u}rnberg}
\and Barbara~Wohlmuth\footnotemark[2]}
\begin{document}

\maketitle

\begin{abstract}
Matrix-free finite element implementations for large applications provide an attractive alternative to standard sparse matrix data formats due to the significantly reduced memory consumption.
Here, we show that they are also competitive with respect to the run time \revised{in the low order case} if combined with suitable stencil scaling techniques.
We focus on variable coefficient vector-valued partial differential equations as they arise in many physical applications.
The presented method is based on scaling constant reference stencils \revised{originating from a linear finite element discretization} instead of evaluating the bilinear forms on-the-fly.
\revised{This method assumes the usage of hierarchical hybrid grids, and it may be applied to vector-valued second-order elliptic partial differential equations directly or as a part of more complicated problems.}
We provide theoretical and experimental performance estimates showing the advantages of this new approach compared to the traditional on-the-fly integration and stored matrix approaches.
In our numerical experiments, we consider two specific mathematical models.
Namely, linear elastostatics and incompressible Stokes flow.
The final example considers a non-linear shear-thinning generalized Newtonian fluid.
For this type of non-linearity, we present an efficient approach to compute a regularized strain rate which is then used to define the node-wise viscosity.
\revised{Depending on the compute architecture, we could observe maximum speedups of $64$\% and $122$\% compared to the on-the-fly integration.
The largest considered example involved solving a Stokes problem with $12\,288$ compute cores on the state of the art supercomputer SuperMUC-NG.}

\end{abstract}

\begin{keywords}
matrix-free, finite elements, variable coefficients, stencil scaling
\end{keywords}

\begin{AMS}
65N30, 65N55, 65Y05, 65Y20  \end{AMS}

\section{Introduction}
\label{sec:introduction}
In this article, we study the efficiency of large scale low-order finite element computations
and we examine which accuracy can be obtained at what cost.
High performance computing is expensive, not only in terms of
investments in supercomputer systems, but also in terms of operational cost. 
In particular,
energy consumption is becoming a critical factor; see, e.g.,~the emerging rankings like the GREEN500 list\footnote{\url{https://www.top500.org/green500/lists/2019/11/}}.
Therefore, it is crucial to rethink long-established computing practices 
and to study, quantify, and improve the efficiency of current numerical algorithms.

We primarily strive to reduce the absolute compute times.
This is of course a viable goal in its own right, but they are also directly related to the required energy for a computation.
At this point, we note that while scalability is necessary for 
efficient large scale parallel computing,
scalability alone would not imply an efficient use of resources. 
In fact, inefficient codes are often found to scale better than efficient ones.
Similarly, the asymptotic convergence rate of a discretization scheme is an important
mathematical criterion affecting the accuracy, but ultimately only the error itself matters
including the constants involved.
Such considerations gain additional relevance  at a time when Moore's law
slows down and technological progress will not produce computers anymore
that automatically run twice as fast with every new year.
In this situation, innovation and improvements must rely 
increasingly on better implementations and on algorithms that are 
better suited for the available architectures.

Considering efficiency in this more rigorous sense,
it is found that data transport and not only the executed operations 
are critical factors.
Here, data transport does not only include message passing
communication in a large parallel cluster,
but also the data transport within each node of such a cluster, i.e.,~from main memory to
the CPU, and even within a CPU between the different layers of caches and
the registers of the functional units \cite{hager2010introduction}.
The energy consumption for operations and data transport in a typical CPU architecture
has been quantified in \cite{Afzal2015}.
Additionally, it is of course essential to exploit fine-grained concurrency in the 
form of multi-node architectures and by the use of vectorization.
In order to achieve optimal performance, we must be aware that the current speed of memory 
cannot keep up with the speed of processors and that most of the energy is spent
on the data transfers.
Therefore, an important characteristic relevant 
for the efficiency of numerical algorithms on modern
computers, is their {\em balance} or {\em floating point intensity},
i.e.,~the ratio of floating-point operations (FLOP) 
performed per byte of memory access \cite{hager2010introduction}.

Almost all traditional finite element libraries construct global stiffness
matrices by looping over local elements and adding their contributions to the global matrix.
Even when stored in compressed formats, storing these matrices requires significantly more memory than storing the solution vectors.
Not only the memory consumption presents a challenge, but also the memory traffic and latency in loading the non-zero matrix indices and entries needs to be taken into account.

To improve on the memory consumption and memory access,
matrix-free methods constitute a possible remedy where only the results of matrix vector products are computed without assembling and storing the whole global matrix.
Different strategies exist to implement matrix-free methods,
but the predominant candidate for low-order finite elements is the
element-by-element approach \cite{Arbenz:2008:IJMME,Bielak:2005:CMES,Carey:1986:CANM,Flaig:2012:LSSC,Rietbergen:1996:IJMME}, wherein local stiffness matrices are multiplied by local vectors and later added to the global solution vector.
These local stiffness matrices may either be stored individually in
memory\textemdash{}which actually requires {more} memory than storing the global matrix\textemdash{}or computed on-the-fly.
When using high-order finite elements, the weak forms can be integrated on-the-fly using standard or reduced quadrature formulas \cite{Brown:2010:JSC,Kronbichler:2012:CAF,Ljungkvist:2017:HPC17,Ljungkvist:2017:TechRep,May:2015:CMAME}.
This is a well-suited strategy for future architectures because of its
high arithmetic intensity \cite{loffeld2017arithmetic}\revised{, but we present a method that can compete with matrix-based methods even in the low order case}.
In \cite{scalarstencilscaling}, we presented an alternative matrix-free stencil scaling approach for accelerating low-order finite element implementations 
suited for scalar second-order elliptic partial differential equations (PDE).
There it was shown that the method was able to reduce the computational cost significantly.

We will here expand on this idea and present a similar matrix-free approach for vector-valued \revised{second-order elliptic} PDEs. The construction is based on the use of hierarchical hybrid grids (HHG) which form the basis in the HHG \cite{bergen2004hierarchical, bergen2007hierarchical,
gmeiner2012highly} and HyTeG \cite{Kohl2018HyTeGfinite} frameworks.
\revised{These grids are constructed by starting with an initial, possibly unstructured, simplicial triangulation of a polygonal domain and refining each element multiple times uniformly in order to create a hierarchy of meshes.
Ultimately, we associate to each of these meshes a piecewise linear finite element space.
By exploiting the structure obtained by these uniform refinements, it is possible to improve the performance of the finite element solver by using a stencil based code.}
Vector-valued second-order elliptic PDEs arise in the modeling of elastostatics and fluid dynamics and play an important role in mathematical modeling.
\revised{We show that the idea of the scalar stencil scaling cannot be applied to these equations, since for vector-valued PDEs the simple scaling results in the discretization of a different PDE, if the coefficient is not constant.}
Thus, there is need of a modified stencil scaling method that is also suited for matrix-free finite element implementations on HHGs.
Although this vector-valued scaling is more complicated and more expensive than the scalar stencil scaling, it has the ability to reproduce the standard finite element solutions while requiring only a fraction of the time to obtain them.

The principal novelty is the presentation of an improved method to assemble stencils for vector-valued second-order elliptic PDEs suitable for matrix-free solvers on HHGs \revised{coupled with a linear finite element discretization}.
We provide theoretical and experimental performance comparisons which outline the advantages of the stencil scaling approach.
Furthermore, we show the convergence and the run-times of this extended stencil scaling through numerical experiments.
In these numerical experiments, we consider two specific mathematical models; namely, linear elastostatics, and generalized incompressible Stokes flow.
In the final example, a non-linear shear-thinning non-Newtonian example is considered, where the viscosity depends on the shear rate.

\section{Model equations and discretization}
The goal of this paper is to speed up matrix-free finite element implementations for solving vector-valued second order elliptic PDEs in a domain $\Omega \subset \reals^d$, $d \in \{2,3\}$, of the following form
\begin{equation}
\begin{alignedat}{2}
-\div \stress &= \v{f} &&\quad \text{in } \Omega,\\
\v{u} &= \v{g} &&\quad \text{on } \Gamma_\text{D},\\
\stress \cdot \normal &= \v{\hat{t}} &&\quad \text{on } \Gamma_{\text{N}},
\end{alignedat}
\label{eqn:strongform}
\end{equation}
where the stress $\stress$ = $\stress(\strain)$ depends on the strain and additional material parameters.
One particular example for $\stress$ that we investigate more thoroughly, 
is the stress tensor for linear elasticity with isotropic continuous materials given by Hooke's law as $\stress(\strain) = 2\mu \strain  + \lambda \; \trace{\strain} \identity$.
Furthermore, generalized incompressible Stokes flow problems may also be cast in this form when adding additional constraints.
In this  case, the stress tensor is defined by $\stress(\strain) = 2\mu \strain  - p \identity$, where an additional pressure variable $p$ has been introduced and the incompressibility constraint $\div \v{u} = 0$ in $\Omega$ is enforced.
The domain boundary $\partial\Omega$ is split into two disjoint parts, the \revised{non-trivial} Dirichlet boundary $\Gamma_\text{D}$ and the Neumann boundary $\Gamma_{\text{N}}$.
See \cref{tab:vardefinitions} for a complete list of occurring variables and their definitions.
\begin{table}[h]
\centering
\caption{\label{tab:vardefinitions} Required symbols and their definitions.}
\begin{tabular}{c|l}
\toprule
Symbol & Definition\\ \midrule
$\v{u}$ & displacement or velocity\\
$p$ & pressure\\
$\strain$ & strain: $\frac{1}{2} \left( \grad \v{u} + \grad \v{u}^\top \right)$\\
$\v{f}$ & body forces\\
$\v{g}$ & prescribed displacement or velocity\\
$\v{\hat{t}}$ & external forces\\
$\v{n}$ & outward-pointing unit-normal vector\\
$\lambda$ & Lam\'{e}'s first parameter\\
$\mu$ & shear modulus / dynamic viscosity\\ \bottomrule
\end{tabular}
\end{table}
For the rest of this section, we restrict ourselves to the case of linear elastostatics, since the method may be applied to the momentum balance of the Stokes equations in the same way.
This is demonstrated in the numerical results presented in \Cref{sec:genstokes}.
For simplicity, we consider a homogeneous Dirichlet boundary $\Gamma_{\text{D}}$ \revised{for the rest of this section, so $\v{g} = \v{0}$}.
The weak form of \cref{eqn:strongform} in case of linear elastostatics employing Hooke's law reads: For a suitable space $V$ \revised{incorporating the Dirichlet boundary conditions}, find $\v{u} \in V$ such that $a(\v{u}, \v{v}) = f(\v{v})\; \forall\,\v{v} \in V$, where
\begin{align}
\label{eq:bilinearformstd}
a(\v{u}, \v{v}) &= \scpd[\Omega]{2\mu\strain\left(\v{u}\right)}{\strain\left(\v{v}\right)} + \scpd[\Omega]{\lambda \div \v{u}}{\div \v{v}},\\
f(\v{v}) &= \scpd[\Omega]{\v{f}}{\v{v}} + \scpd[\Gamma_N]{\v{\hat{t}}}{\v{v}}.
\end{align}
By $\scpd[\Omega]{\cdot}{\cdot}$ we denote the standard duality product in $V$.

\revised{In order to discretize the problem, we decompose the computational domain in the typical HHG manner \cite{bergen2005hierarchical,bergen2004hierarchical,bergen2007hierarchical}.
Let $\mesh_{H}$ be a possibly unstructured simplicial triangulation of a bounded polygonal or polyhedral domain 
$\Omega \subset \reals^d$, $d \in \{2,3\}$.
Based on this initial grid, we construct a hierarchy of $L+2$, $L \in \mathbb{N}$, grids $\mesh = \{\mesh_h,\; h = 2^{0}\, H,\ldots,2^{-L-1}\, H\}$ by successive global uniform refinement.
As it is standard, each of these refinements is achieved by subdividing all elements into $2^d$ sub-elements.
For details of the refinement in 3D, we refer to \cite{bey1995tetrahedral}.
We also call $ \mesh_{H}$ macro-triangulation and denote its elements by $T$, whereas the elements of $ \mesh_h$, are denoted by $t_h$.
For better readability, we drop the index $h$ whenever its value is clear from the context.
Since our data structures require at least one interior vertex per macro element, we use the mesh $\mesh_h$ with $h = 2^{-2}H$ as the coarse grid in our multigrid solver hierarchy.
Each of the $\mesh_h$ for $h \leq 2^{-2}H$ has some crucial properties we want to exploit.
The element neighborhood at each vertex in the interior of a macro element is always the same, cf.~\cref{fig:reflected}.
Provided that the coefficient in the PDE is constant, we can construct stencils like in a finite-difference scheme, but with finite elements.
Another important property is that the neighboring elements have some similarities.
In 2D, there are always two elements attached to each stencil edge; see left of \cref{fig:reflected}.
These two elements $t$ and $t^m$ are congruent and they differ only by a reflection along the stencil edge.
A similar structural property holds in 3D which we discuss later in \Cref{sec:corr3d}.

Associated with $ \mesh_h$, is the space $V_h \subset V$ of piecewise linear finite elements.
}\begin{figure}
	\centering
	\begin{tikzpicture}
	\coordinate (one) at (0,0);
	\coordinate (two) at (0,2);
	\coordinate (three1) at (-2,1.5);
	\coordinate (three2) at (2,0.5);
	\coordinate (twoprime) at (0,-2);
	\coordinate (three1prime) at (2,-1.5);
	\coordinate (three2prime) at (-2,-0.5);
	
	\draw[fill=lightgray!60] (one) -- (three1) -- (two);
	\draw[fill=lightgray!60] (one) -- (three2) -- (two);
	\draw[line width=1.5pt] (one) -- (two);
	\draw[line width=1.5pt] (one) -- (three1);
	\draw[line width=1.5pt] (one) -- (three2);
	
	\draw[line width=1.5pt] (one) -- (twoprime);
	\draw[line width=1.5pt] (one) -- (three1prime);
	\draw[line width=1.5pt] (one) -- (three2prime);

	\node[] at ($0.33*(one)+0.33*(two)+0.33*(three1)$) {$t$};
	\node[] at ($0.33*(one)+0.33*(two)+0.33*(three2)$) {$t^m$};
	
	\draw[black,fill=Apricot] (one) circle (5pt);
	\draw[black,fill=black] (two) circle (3pt);
	\draw[black,fill=black] (twoprime) circle (3pt);
	\draw[black,fill=black] (three1) circle (3pt);
	\draw[black,fill=black] (three2) circle (3pt);
	\draw[black,fill=black] (three1prime) circle (3pt);
	\draw[black,fill=black] (three2prime) circle (3pt);
	
	\node[circle,below left=7pt and 4pt,draw,inner sep=2.5pt] at (one) {$i$};
	\node[circle,above right=5pt and 0pt,draw,inner sep=1.5pt] at (two) {$j$};
	
	\end{tikzpicture}
	\hspace*{2cm}
	\includegraphics[width=0.24\textwidth]{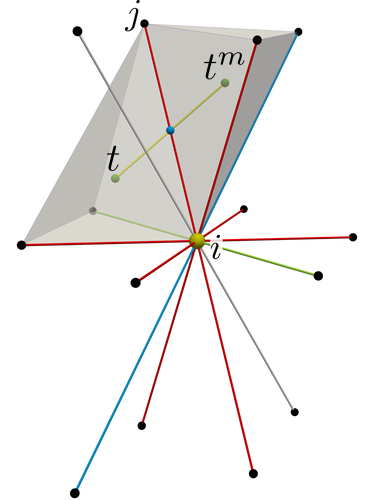}
	\caption{\label{fig:reflected}Element $t$ and reflected element $t^m$ along an edge between nodes $i$ and $j$ in the 2D case (left) and the 3D case (right).}
\end{figure}
Let $\bfe_i \in \reals^d$ be the canonical unit vector with $(\bfe_i)_j = \delta_{ij}$ for $1 \leq i,j \leq d$, where $\delta_{ij}$ is the Kronecker delta.
Let further $\phi_i \in V_h$ and $\phi_j \in V_h$ be the scalar valued linear nodal basis functions associated to the $i$-th and $j$-th mesh node.
Denote by \revised{$\v{v}_h = \sum_{i} {\v{v}}^{(j)} \phi_{j} $} and \revised{$\v{w}_h = \sum_{j} \v{w}^{(i)} \phi_{i}$} linear combinations of the nodal basis function with vector valued coefficients \revised{${\v{v}}^{(i)} \in \reals^d$ and $\v{w}^{(j)} \in \reals^d$}.
We split the bilinear form \cref{eq:bilinearformstd} in terms of contributions of the bilinear form $a^T$ restricted to each \revised{macro element} $\tet \in \mesh_H$, i.e.,
\begin{equation}
\label{eqn:macrobilinearform}
\begin{aligned}
a(\v{v}_h, \v{w}_h) &= \sum_{\tet \in \mesh_H}^{} a^\tet(\v{v}_h, \v{w}_h) = \sum_{\tet \in \mesh_H}^{} \sum_{i,j} a^\tet(\revised{\v{v}^{(j)}}\phi_j, \revised{\v{w}^{(i)}} \phi_i)\\
&= \sum_{\tet \in \mesh_H}^{} \sum_{i,j} \sum_{l,m=1}^{d} (\revised{\v{v}^{(j)}})_l (\revised{\v{w}^{(i)}})_m \, a^\tet(\phi_j \bfe_l, \phi_i \bfe_m).
\end{aligned}
\end{equation}
In order to simplify notation, we introduce the operator $D$ in place of either differential operator, i.e., $D\v{u} = \epsilon(\v{u})$ or $D\v{u} = \div \v{u}$ and the coefficient placeholder $k$, i.e., $\coeff = \mu$ or $\coeff = \lambda$. For the rest of this subsection, we restrict ourselves to the general bilinear form
\begin{align}
a(\v{u}, \v{v}) &= \scpd[\Omega]{\coeff \cdot D \left(\v{u}\right)}{D \left(\v{v}\right)}.
\end{align}

Using the standard finite element approach, this bilinear form is usually discretized in the following way.
Let $i_t$ and $j_t$ be the local indices of an element $\microtet \in \mesh_h$ associated with the global mesh nodes $i$ and $j$. We denote by $k^t$ the arithmetic mean over all the vertex coefficient values of an element $t$, i.e.,
\begin{align}
\label{eqn:nodalquad}
\bar{\coeff}^t = \frac{\sum_{p=1}^{d+1} \coeff(\bfx^t_p)}{d+1},
\end{align}
where $\bfx^t_p$ are the vertex coordinates of the local element $t$.
Let $\phi^t_i$ and $\phi^t_j$ be the local scalar valued linear nodal basis functions associated to the local vertices $i_t$ and $j_t$.
Because the derivatives of the linear basis functions are constant, employing \cref{eqn:nodalquad} as a quadrature rule to approximate the bilinear form \cref{eqn:macrobilinearform} yields
\begin{align}
\label{eqn:nodalintegration}
a_h(\v{v}_h, \v{w}_h) &=\sum_{\tet \in \mesh_H} \sum_{i,j} \sum_{l,m=1}^{d} (\revised{\v{v}^{(j)}})_l \cdot (\revised{\v{w}^{(i)}})_m \sum_{t \in \mesh_h^{i,j;T}}^{} \bar{\coeff}^t \integral{t}{}{(D (\phi^t_j \bfe_l), D (\phi_i^t \bfe_m))}{x},
\end{align}
where $\mesh_h^{i,j;T}$ is the set of all elements within a \revised{macro element} $T$ adjacent to the edge through $i$ and $j$.
\revised{Note that the number of elements in this set is small due to the HHG structure.
In 2D, there are only two elements adjacent to an interior edge and in 3D there are two types of interior edges.
One type with four elements and a second one with six elements adjacent to it.
Please refer to \Cref{sec:corr3d} for more details.}
Throughout the paper, we denote the bilinear form $a_h(\cdot,\cdot)$ defined in \cref{eqn:nodalintegration} by \emph{nodal integration}.
We note that the choice of $\bar{\coeff}^t$ is natural in case that the coefficient function is stored at the nodes.
Alternative definitions such as the evaluation of the coefficient function at the center of an element are also suitable and preferred if the coefficient function must be evaluated analytically.

With these considerations in mind, we define the reference stencil $\refstencil_{ij}^\tet \in {(\reals^{d \times d})}$ for $\tet \in \mesh_H$ as a $d \times d$ matrix for every pair of mesh nodes $i$ and $j$ by
\begin{align}
\label{eqn:stiffcontribution}
{\left(\refstencil^\tet_{ij}\right)}_{lm} = \integrald{\tet}{}{(D (\phi_j \bfe_l), D (\phi_i \bfe_m))}{x}.
\end{align}
See \cref{fig:reflected} for an illustration of stencils in 2D and 3D in the interior of a \revised{macro element}.
Each edge in the stencil pictures corresponds to a neighbor $j$ of a central entry $i$ in the mesh $\mesh_h$.
\revised{Recall that, as described in the construction of HHGs, the structure in the interior of a single macro element is always the same.
It is also interesting to note that in \cite{scalarstencilscaling} each stencil weight consists of a scalar real value, but here, each stencil weight consists of a $\reals^{d \times d}$ matrix corresponding to the interaction between the dimensional components.}

If $k\equiv 1$, the integrals in \cref{eqn:nodalintegration} may be replaced by the corresponding reference stencils and the bilinear forms \cref{eqn:nodalintegration} and \cref{eqn:macrobilinearform} are equal on the discrete space $V_h \times V_h$.
The following lemma presents a decomposition of the bilinear form \cref{eqn:nodalintegration} which is better suited for matrix-free methods because of its lower operational count while requiring a comparable amount of memory traffic.
This decomposition is very similar to a decomposition of the displacement or velocity field into a symmetric strain rate part and an antisymmetric rotational part.
\begin{lemma}
\label{lemma:rewrite}
Under the assumption that the coefficient $k$ is affine linear on each local element patch $\omega_{i,j;T} = \bigcup_{t \in \mesh_h^{i,j;T}} \overline{t}$, the bilinear form~\cref{eqn:nodalintegration} may be decomposed into a symmetric part with a scaled reference stencil and a remaining antisymmetric correction term $R$:
\begin{align}
\label{eqn:bilinearformscaling}
\hat{a}_h(\v{v}_h, \v{w}_h) &=\sum_{\tet \in \mesh_H} \sum_{i,j} \sum_{l,m=1}^{d} \left(\hat{\coeff}_{ij}^\tet \cdot {(\refstencil^\tet_{ij})}_{lm} + {\left(\corrterm^\tet(\coeff)_{ij}\right)}_{lm}\right) \cdot (\revised{\v{v}^{(j)}})_l (\revised{\v{w}^{(i)}})_m,
\end{align}
where $\hat{\coeff}_{ij}^\tet$ is specified in \cref{eq:patchmean}.
\end{lemma}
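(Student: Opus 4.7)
The plan is to view \cref{eqn:bilinearformscaling} as an algebraic rewriting of \cref{eqn:nodalintegration} obtained by adding and subtracting a suitable patch-averaged coefficient. More concretely, for every macro element $\tet$ and every stencil edge $(i,j)$ with patch $\omega_{i,j;T} = \bigcup_{t \in \mesh_h^{i,j;T}} \overline{t}$, I would introduce the scalar $\hat{\coeff}_{ij}^\tet$ (the quantity that the lemma points to via \cref{eq:patchmean}) as an average of the element means $\bar{\coeff}^t$ over $t \in \mesh_h^{i,j;T}$. By the affine linearity of $\coeff$ on the patch, this average is exactly the value of $\coeff$ at a distinguished point on the edge through $i$ and $j$, so it depends only on the edge and on $\coeff|_{\omega_{i,j;T}}$.

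Next I would substitute the trivial identity $\bar{\coeff}^t = \hat{\coeff}_{ij}^\tet + \bigl(\bar{\coeff}^t - \hat{\coeff}_{ij}^\tet\bigr)$ into the inner sum of \cref{eqn:nodalintegration} and split it into two contributions. The first gives
\begin{equation*}
\hat{\coeff}_{ij}^\tet \sum_{t \in \mesh_h^{i,j;T}} \integrald{t}{}{(D(\phi_j^t \bfe_l), D(\phi_i^t \bfe_m))}{x} \;=\; \hat{\coeff}_{ij}^\tet \cdot {(\refstencil^\tet_{ij})}_{lm},
\end{equation*}
by the very definition \cref{eqn:stiffcontribution} of the reference stencil (together with the fact that the restrictions $\phi^t_i, \phi^t_j$ coincide with $\phi_i, \phi_j$ on $t$). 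The second contribution, namely
\begin{equation*}
{\left(\corrterm^\tet(\coeff)_{ij}\right)}_{lm} \;=\; \sum_{t \in \mesh_h^{i,j;T}} \bigl(\bar{\coeff}^t - \hat{\coeff}_{ij}^\tet\bigr) \integrald{t}{}{(D(\phi_j^t \bfe_l), D(\phi_i^t \bfe_m))}{x},
\end{equation*}
I would then identify with the correction $R^\tet(\coeff)_{ij}$.

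The remaining task is to verify that $R^\tet(\coeff)_{ij}$ has the structure hinted at by the adjective \emph{antisymmetric}. Here I would exploit the reflection symmetry depicted in \cref{fig:reflected}: for any two elements $t, t^m \in \mesh_h^{i,j;T}$ that are reflected along the edge $(i,j)$, the affine linearity of $\coeff$ yields $\bar{\coeff}^t - \hat{\coeff}_{ij}^\tet = -\bigl(\bar{\coeff}^{t^m} - \hat{\coeff}_{ij}^\tet\bigr)$ because the two element centroids lie symmetrically with respect to the distinguished point on the edge used to define $\hat{\coeff}_{ij}^\tet$. Consequently only the component of the integrand that is \emph{odd} under the reflection survives in $R^\tet(\coeff)_{ij}$, whereas the even (symmetric) component is already absorbed into $\hat{\coeff}_{ij}^\tet\,\refstencil^\tet_{ij}$. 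This is the analogue of decomposing $\nabla \v{u}$ into its symmetric strain part and its antisymmetric rotational part that is alluded to before the lemma.

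The main obstacle is exactly this last step in three dimensions: there the edge patch contains either four or six elements (cf.\ \Cref{sec:corr3d}), so the pairwise reflection argument in 2D has to be replaced by a grouping of the elements of $\mesh_h^{i,j;T}$ into pairs that are mapped onto each other by the reflection across a plane containing the edge $(i,j)$, and one has to check that the choice of $\hat{\coeff}_{ij}^\tet$ in \cref{eq:patchmean} indeed makes $\bar{\coeff}^t - \hat{\coeff}_{ij}^\tet$ antisymmetric within each such pair. Once this grouping is established, the identification of the two summands in \cref{eqn:bilinearformscaling} is a one-line computation, and the proof is complete.
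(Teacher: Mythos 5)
Your plan is in substance the paper's own proof: adding and subtracting $\hat{\coeff}_{ij}^\tet$ and then grouping the elements of $\mesh_h^{i,j;\tet}$ into reflected pairs is algebraically the same manipulation as the paper's symmetrization of the sum over the patch, and your use of affine linearity---that $\hat{\coeff}_{ij}^\tet$ is the value of $\coeff$ at the edge midpoint, so that $\bar{\coeff}^t-\hat{\coeff}_{ij}^\tet=-\bigl(\bar{\coeff}^{t^m}-\hat{\coeff}_{ij}^\tet\bigr)$---is exactly the identity $\bar{\coeff}^t+\bar{\coeff}^{t^m}=2\coeff\bigl(\tfrac{x_i+x_j}{2}\bigr)=2\hat{\coeff}_{ij}^\tet$ used there; your first display (the scaled reference stencil term) is also correct.

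The step you introduce with ``consequently'' is, however, the crux, and it does not follow from the coefficient antisymmetry alone: to conclude that only the antisymmetric part survives in the remainder you must know how the local stiffness blocks transform under the pairing map. The paper establishes $\nabla\phi_i^t=-\nabla\phi_j^{t^m}$, whence $\localstiff^{t^m}_{ij}=\bigl(\localstiff^{t}_{ij}\bigr)^\top$, i.e.\ $\localstiff^{s;t}_{ij}=\localstiff^{s;t^m}_{ij}$ and $\localstiff^{a;t}_{ij}=-\localstiff^{a;t^m}_{ij}$ in the notation of \cref{eq:symmmantisymm}; only with this does each pair contribute $(\bar{\coeff}^t+\bar{\coeff}^{t^m}-2\hat{\coeff}_{ij}^\tet)\localstiff^{s;t}_{ij}+(\bar{\coeff}^t-\bar{\coeff}^{t^m})\localstiff^{a;t}_{ij}$, whose symmetric part vanishes by affine linearity and whose antisymmetric part reproduces \cref{eq:corrtermgeneral}. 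In other words, ``odd under the reflection'' must be identified with ``antisymmetric in the component indices $l,m$'', and that identification needs an explicit argument. Related to this, the pairing map is the point reflection $x\mapsto x_i+x_j-x$ through the edge midpoint (it swaps $i$ and $j$ and sends $p_t,q_t$ to $p_{t^m},q_{t^m}$), not, as you write for 3D, a reflection across a plane containing the edge; for a plane reflection neither the centroid symmetry you invoke nor the transpose relation above need hold, whereas for the point reflection both are immediate, and the four- or six-element patches decompose into such pairs by the HHG construction, so the ``main obstacle'' you flag is unproblematic. Finally, your sketch leaves the diagonal terms $i=j$ untreated: the paper sets $\corrterm^\tet(\coeff)_{ii}=0$, $\hat{\coeff}_{ii}^\tet=1$ and redefines the central entry through the zero-row-sum property, and some such convention is needed for \cref{eqn:bilinearformscaling} to cover all $i,j$.
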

\begin{proof}
Let the local stiffness tensor of a local element $t$ be given by
\begin{align}
{\left(\localstiff^t_{ij}\right)}_{lm} = \integral{t}{}{\left(D \left(\phi_j^t \bfe_l\right), D \left(\phi^t_i \bfe_m\right)\right)}{x}.
\end{align}
In the following, we assume that $i \neq j$ and that $k$ is linear on the patch $\omega_{i,j;T}$. Additionally, we introduce the symmetric part $\localstiff^{s;t}_{ij}$ and the antisymmetric part $\localstiff^{a;t}_{ij}$ of $\localstiff^{t}_{ij}$ defined by
\begin{align}
\label{eq:symmmantisymm}
\localstiff^{s;t}_{ij} = \frac{1}{2} \left(\localstiff^{t}_{ij} + \left(\localstiff^{t}_{ij}\right)^\top\right) \;\; \text{and} \;\; \localstiff^{a;t}_{ij} = \frac{1}{2} \left(\localstiff^{t}_{ij} - \left(\localstiff^{t}_{ij}\right)^\top\right).
\end{align}
Due to our mesh structure, for each $t$ in the interior of $T$, there exists a reflected element $t^m$, cf., \cref{fig:reflected}.
Exploiting the fact that $\nabla \phi_i^t = -\nabla \phi_j^{t_m}$, one can show that the local stiffness tensors of these elements are related in the following way:
\begin{align}
\localstiff^{s;t}_{ij} = \localstiff^{s;t^m}_{ij} \;\; \text{and} \;\; \localstiff^{a;t}_{ij} = -\localstiff^{a;t^m}_{ij}.
\end{align}
Before proceeding, we define the following arithmetic mean of the coefficients on the patch $\omega_{i,j;T}$ as follows:
\begin{align}
\label{eq:patchmean}
\hat{\coeff}_{ij}^\tet = \frac{1}{|\mesh_h^{i,j;T}|} \sum_{t \in \mesh_h^{i,j;T}}^{} \bar{k}^t,
\end{align}
where $|\mesh_h^{i,j;T}|$ stands for the number of elements in $\mesh_h^{i,j;T}$.
Using these properties, we can rewrite the last sum in \eqref{eqn:nodalintegration} as
\begin{align}
\sum_{t \in \mesh_h^{i,j;T}}^{} \bar{\coeff}^t {(a^t_{ij})}_{lm} &= \frac{1}{2} \sum_{t \in \mesh_h^{i,j;T}}^{} \bar{\coeff}^t {(a^t_{ij})}_{lm} + \bar{\coeff}^{t^m} {(a^{t^m}_{ij})}_{lm} \nonumber \\
&= \frac{1}{2} \sum_{t \in \mesh_h^{i,j;T}}^{} \bar{\coeff}^t {(\localstiff^{s;t}_{ij})}_{lm} + \bar{\coeff}^t {(\localstiff^{a;t}_{ij})}_{lm} + \bar{\coeff}^{t^m} {(\localstiff^{s;t^m}_{ij})}_{lm} + \bar{\coeff}^{t^m} {(\localstiff^{a;t^m}_{ij})}_{lm}\nonumber \\
&= \frac{1}{2} \sum_{t \in \mesh_h^{i,j;T}}^{} (\bar{\coeff}^t + \bar{\coeff}^{t^m}) {(\localstiff^{s;t}_{ij})}_{lm} + (\bar{\coeff}^t - \bar{\coeff}^{t^m}) {(\localstiff^{a;t}_{ij})}_{lm}\nonumber \\
\label{eqn:stencilscaling1}
&= \hat{\coeff}_{ij}^\tet \cdot \refstencil_{ij} + \frac 12 \sum_{t \in \mesh_h^{i,j;T}}^{} (\bar{\coeff}^t - \bar{\coeff}^{t^m}) {(\localstiff^{a;t}_{ij})}_{lm}.
\end{align}
In the last step, we exploited that for an affine linear $k$, we have $\bar{\coeff}^t + \bar{\coeff}^{t^m} = 2k\left(\frac{x_i + x_j}{2}\right) = 2\hat{\coeff}_{ij}^\tet$ and that
$$\sum_{t \in \mesh_h^{i,j;T}} {(a^{s;t}_{ij})}_{lm} = {(\refstencil^\tet_{ij})}_{lm}.$$
With these considerations in mind, we define the tensor $\corrterm^\tet(\coeff)$ for each $i$ and $j$ by
\begin{align}
\label{eq:corrtermgeneral}
\left(\corrterm^\tet(\coeff)\right)_{ij} = \frac 12 \sum_{t \in \mesh_h^{i,j;T}}^{} (\bar{\coeff}^t - \bar{\coeff}^{t^m}) a^{a;t}_{ij}
\end{align}
In case that $i = j$, we set the correction term $\left(\corrterm^\tet(\coeff)\right)_{ii}$ to zero, the scaling term $\hat{\coeff}_{ii}^\tet$ to $1$, and redefine the central stencil entry as
\begin{align}
\revised{\stencil}^\tet_{ii} = -\sum_{j \neq i} \hat{\coeff}_{ij}^\tet \cdot \refstencil^\tet_{ij} + \left(\corrterm^\tet(\coeff)\right)_{ij}.
\end{align}
This zero-row sum property ensures that translational body motions lie in the kernel of the discrete operator induced by \cref{eqn:bilinearformscaling}.
\end{proof}
In addition to the bilinear form~\cref{eqn:bilinearformscaling}, we define the following form where the correction term $R$ has been omitted:
\begin{align}
\label{eqn:stencilscalingwithoutcorr}
\tilde{a}_h(\v{v}_h, \v{w}_h) &=\sum_{\tet \in \mesh_H} \sum_{i,j} \sum_{l,m=1}^{d} \hat{\coeff}_{ij}^\tet \cdot {(\refstencil^\tet_{ij})}_{lm} (\revised{\v{v}^{(j)}})_l (\revised{\v{w}^{(i)}})_m.
\end{align}
Henceforth, we refer to the bilinear form \cref{eqn:bilinearformscaling} as \emph{physical scaling} and to the form \cref{eqn:stencilscalingwithoutcorr} as \emph{unphysical scaling}.

\revised{
\subsection{Interpretation of the unphysical scaling in 2D}
It may be shown that the bilinear form corresponding to the unphysical form belongs to a different PDE. We illustrate this for the differential operator $ -\div (k\, \strain(\v{u}))$ in 2D.
Particularly, in 2D, a straightforward computation shows  for a differentiable coefficient $k$ and smooth $\v{u}$ that the following identity holds true\begin{align}
    \div (k\, (\div \v{u}) I) = \div (k\, \grad \v{u}^\top) + \begin{pmatrix}
(u_2)_{,y} & -(u_2)_{,x} 
\\
-(u_1)_{,y} & (u_1)_{,x}
\end{pmatrix} \grad k = \div (k\, \grad \v{u}^\top) + (\curl O \v{u}) \grad k,
\end{align}
with the identity matrix $I$ and 
\begin{align}
\curl \v{w} = \begin{pmatrix} (w_1)_{,y} &  -(w_1)_{,x} \\ (w_2)_{,y} &  -(w_2)_{,x}  \end{pmatrix} \text{ and }
O \v{w} = \begin{pmatrix} w_2 \\ - w_1  \end{pmatrix} .
\end{align}
Using the above identity, we find
\begin{align}
 -\div (k\, \strain(\v{u})) & = - \frac 12 \div (k\, \grad \v{u}) -\frac{1}{4} \div (k\, \grad \v{u}^\top) -\frac{1}{4} \div (k\, \grad \v{u}^\top) \\ &=\underbrace{
- \frac 12  \div (k\, \grad \v{u}) - \frac{1}{4} \div (k\, (\div \v{u}) I) -\frac{1}{4} \div (k\, \grad \v{u}^\top)}_{ - {\boldmath \nabla} \cdot \sf{A} (\v{u}) } +  \underbrace{\frac 14 (\curl O \v{u} ) \grad k}_{\sf{B} (\v{u})} \\ &=
- \div \sf{A} (\v{u}) + \sf{B} (\v{u}) .
\end{align}
The first term $\div \sf{A} (\v{u}) $ corresponds to a second order differential operator for which we have
\begin{align}
{\sf{A}} \begin{pmatrix} u \\ 0 \end{pmatrix} : \grad \begin{pmatrix} 0 \\ v \end{pmatrix}= 
{\sf{A}} \begin{pmatrix} 0 \\ u \end{pmatrix} : \grad \begin{pmatrix} v \\ 0 \end{pmatrix} .
\end{align}
This property guarantees  that the antisymmetric part of the associated local stencil term is zero and thus it can be simply scaled.
The second term $\sf{B} (\v{u}) $ is obviously equal to zero if $k$ is constant, otherwise it corresponds to a first order differential operator.
Here we find
\begin{align}
\begin{pmatrix} v & 0 \end{pmatrix} {\sf{B}}\begin{pmatrix} 0 \\ u \end{pmatrix} = - \begin{pmatrix} 0 & v \end{pmatrix} {\sf{B}} \begin{pmatrix} u \\ 0 \end{pmatrix} \;\, \text{and} \;\, \begin{pmatrix} v & 0 \end{pmatrix} {\sf{B}}\begin{pmatrix} u \\ 0 \end{pmatrix} = \begin{pmatrix} 0 & v \end{pmatrix} {\sf{B}} \begin{pmatrix} 0 \\ u \end{pmatrix} = 0,
\end{align}
which yields that the symmetric  part of the associated local stencil term is zero.
 A more detailed comparison shows that it corresponds to the antisymmetric correction term $R$.
Therefore, omitting the correction term $R$ in \cref{eqn:bilinearformscaling} with $D = \strain$ does not result in a discretization of the PDE $-\div (k\, \strain(\v{u})) = \v{f}$ but of $ -\div (\sf{A}(\v{u})) = \v{f}$, for a general $k$.
Only in the case that $k$ is constant in $\Omega$, the correction term vanishes and the original PDE is recovered.
We note that the splitting of the differential operator in the terms associated with the operators $\sf{A} $ and $\sf{B} $ corresponds exactly to the splitting of the
 stencil entries in symmetric and antisymmetric parts. 
Similar considerations can be worked out in 3D.}

\subsection{Stencil-based matrix-free methods}
These newly introduced bilinear forms are very well suited for stencil-based matrix-free methods on HHGs, since the reference stencil and the correction terms are always the same for a single \revised{macro element} and only the scaling terms depending on the coefficient need to be recomputed.
In \Cref{sec:performance}, we present a short analysis of the computational cost of the standard approach by nodal integration compared to the scaling based approaches.

In \cref{lemma:rewrite}, we assume that the coefficient $\coeff$ is affine linear on each local patch of elements adjacent to an edge. Therefore, if $\coeff$ is a global affine linear function, both bilinear forms $a_h(\cdot,\cdot)$ and $\hat{a}_h(\cdot,\cdot)$ are equal.
\revised{Since we use linear finite elements, optimal convergence rates may still be observed if a linear local interpolant of a non-linear $k$ is used.}
The unphysical scaling $\tilde{a}_h(\cdot,\cdot)$, however, is only equal to the other bilinear forms when the coefficient $\coeff$ is constant on the whole domain.

\begin{remark}
The definition in \eqref{eq:patchmean} may be replaced by $\hat{\coeff}_{ij}^\tet = \frac{1}{2}\left(k(\bfx_i) + k(\bfx_j)\right)$.
This approach requires fewer floating point operations but numerical experiments suggest that using \eqref{eq:patchmean} yields better accuracy while not having a huge impact on performance.
The memory traffic for either approach is the same because the coefficients need to be loaded from memory anyway in order to compute the correction term.
\revised{This assumes that the {\em layer condition} \cite{hager2010introduction} is satisfied when traversing the HHG data structures, so the values of $\coeff$ need to be read from memory only once.}
\end{remark}
In practice, this scaling of the reference stencil is only done in the interior of \revised{macro elements} where asymptotically most computations are performed in order to evaluate the bilinear form.
The physically scaled form $\hat{a}_h(\cdot,\cdot)$ is thus redefined as
\begin{equation}
\hat{a}_h(\phi_j \bfe_l,\phi_i \bfe_m)
=
\begin{cases}
a_h(\phi_j \bfe_l,\phi_i \bfe_m)\,,\quad & \text{if } x_i \in \partial T \text{ and } x_j \in \partial T \text{ of at least one } T \in \mesh_H,\\
\hat{a}_h(\phi_j \bfe_l,\phi_i \bfe_m)\,, & \text{otherwise.}
\end{cases}
\end{equation}
This definition enforces global symmetry of the matrix, but requires taking into account special boundary cases when iterating over the interior of \revised{macro elements}.
In practice, we therefore employ an alternative definition where we use the standard bilinear form only if $x_i \in \partial T$ of at least one $T \in \mesh_H$.
This loss of global symmetry across \revised{macro element} interfaces may cause problems for iterative solvers relying on symmetric matrices.
However, this symmetry loss can be regarded as higher order perturbation and in the numerical experiments provided in \Cref{sec:numerical_results}, no degradation of the convergence of the employed iterative solvers could be observed.

In the following two subsections, we show how to efficiently pre-compute most parts of the correction term \eqref{eq:corrtermgeneral} in order to be suitable for stencil based codes.
Since the correction term depends on the space dimension, we derive it separately for 2D and 3D.

\revised{\begin{remark}
The presented idea of scaling the reference stencils and thus obtaining an accurately enough approximation of the stiffness matrix entries is only valid for linear finite elements.
However, for higher-order discretizations it is still possible to exploit the HHG structure.
There, we assume that the reference basis functions, their gradients, and the coefficient are evaluated and stored at the quadrature points.
Using these values, the stencils are assembled similarly as in \cite{Kronbichler:2012:CAF}.
Let $\mathcal{Q}_t$ be the set of quadrature points in an element $t \in \mesh_h$.
Due to the HHG structure, for each $q_t \in \mathcal{Q}_t$ there exists a reflected quadrature point $q_{t^m} \in \mathcal{Q}_{t^m}$.
For each pair of reflected elements $t$ and $t^m$ attached to an edge between two nodes $x_i$ and $x_j$ we need to store the matrices $(E_{q_t})_{lm} = (D (\phi_j(x_{q_t}) \bfe_l), D (\phi_i(x_{q_t}) \bfe_m))$ for $q_t \in \mathcal{Q}_t$.
Let be $\mesh_h^{i,j;T;m}$ the set of half of the elements within a macro element $T$ adjacent to the edge through $i$ and $j$ which have a unique corresponding reflected element which is not in the set.
Additionally, let $\omega_{q_t}$ be the quadrature weights corresponding to the quadrature points in $\mathcal{Q}_t$.
The stencil $\stencil^T_{ij}$ may then by computed via
\begin{align}
\stencil^T_{ij} = \sum_{t \in \mesh_h^{i,j;T;m}} \sum_{q_t \in \mathcal{Q}_t} |t| \omega_{q_t} \left(k_{q_t} + k_{q_{t^m}}\right) E_{q_t}.
\end{align}
\end{remark}}

\subsection{Correction term in 2D}
In this subsection, we consider the correction term \cref{eq:corrtermgeneral} in the case of two dimensions, i.e., $d = 2$, and present a closed form of its values. The antisymmetric part $a^{a;t}_{ij}$ of $a^{t}_{ij}$ \revised{defined through \cref{eq:symmmantisymm} is determined by a single variable $\gamma^{(i,j);t}$ and is of the following form}
\begin{align}
\label{eqn:antisymmetric2d}
\localstiff^{a;t}_{ij} = \begin{pmatrix}
0 & -\gamma^{(i,j);t}\\
\gamma^{(i,j);t} & 0
\end{pmatrix}.
\end{align}
Let \revised{$\v{n}(x) = (n_1(x), n_2(x))^\top$} be the outward pointing unit-normal of an element $t \in \mesh_h$ \revised{for $x \in
\partial t$} and \revised{$\bm{\tau}(x) = (-n_2(x), n_1(x))^\top$} the corresponding tangential vector.
In the following, we assume that the differential operator $D$ is given by $D\v{u} = \strain(\v{u})$.
Additionally, in the constant coefficient reference case, we have $\coeff = 1$.
Doing the same computations with $D\v{u} = \div \v{u}$ results in the same values just with a flipped sign.
The value $\gamma^{(i,j);t}$ can be rewritten as
\begin{align}
\gamma^{(i,j);t} &= \integral{\microtet}{}{\strain\left(\phi_{j} \bfe_1 \right) : \strain\left( \phi_{i} \bfe_2 \right) - \strain\left( \phi_{j} \bfe_2 \right) : \strain\left( \phi_{i} \bfe_1 \right)}{x}
= \frac{1}{2}\integral{\microtet}{}{\phi_{j,y} \phi_{i,x} - \phi_{j,x} \phi_{i,y}}{x}\\
&= \frac{1}{2}\integral{\partial \microtet}{}{\phi_{i} \left(\phi_{j,y} n_1 - \phi_{j,x} n_2\right)}{s}
= \frac{1}{2}\integral{\partial \microtet}{}{\phi_{i} \left(\phi_{j,y} \tau_2 + \phi_{j,x} \tau_1\right)}{s}
= \frac{1}{2}\integral{\partial \microtet}{}{\phi_{i} \grad \phi_{j} \cdot \bm{\tau}}{s}.
\end{align}
\begin{figure}	\centering
	\begin{minipage}{0.33\textwidth}
		\centering
		\begin{tikzpicture}
		\tikzset{font={\fontsize{8pt}{12}\selectfont}}
		\begin{scope}[auto, every node/.style={minimum size=1.6em,inner sep=1},node distance=2cm]
		
		\coordinate (mcc) at (-2,0);
		\coordinate (kc) at (0,0);
		\coordinate (qc) at (-2,2.5);
		\coordinate (pc) at ($.333*(mcc)+.333*(kc)+.333*(qc)$);
		\coordinate (qprimec) at (0,-2.5);
		\coordinate (pprimec) at ($.333*(mcc)+.333*(kc)+.333*(qprimec)$);
		
		\node[draw,circle] at (mcc) (mc) {$i_t$};
		\node[draw,circle] at (kc) (k) {$j_t$};
		\node[] at (pc) (p) {$t$};
		\node[draw,circle] at (qc) (q) {$p_t$};
		\node[] at (pprimec) (tm) {$t^m$};
		\node[draw,circle] at (qprimec) (pprime) {$p_{t^m}$};
		
		\draw[line width=2pt] (k) -- (mc);
		\draw[line width=2pt] (mc) -- (q);
		\draw[line width=2pt] (q) -- (k);
		
		\draw[line width=2pt] (mc) -- (pprime);
		\draw[line width=2pt] (k) -- (pprime);

		\end{scope}
		\end{tikzpicture}
	\end{minipage}
	\hspace*{0.1em}
	\begin{minipage}{0.33\textwidth}
		\centering
		\begin{tikzpicture}
		\tikzset{font={\fontsize{8pt}{12}\selectfont}}
		\begin{scope}[auto, every node/.style={,minimum size=1.6em,inner sep=1},node distance=2cm]
		\coordinate (mcc) at (-2,0);
		\coordinate (kc) at (0,0);
		\coordinate (qc) at (-2,2.5);
		\coordinate (pc) at ($.333*(mcc)+.333*(kc)+.333*(qc)$);
		\coordinate (pprimec) at (-0.5,-1);
		\coordinate (qprimec) at (0,-2.5);
		\node[draw,circle] at (mcc) (mc2) {$i_t$};
		\node[draw,circle] at (kc) (k2) {$j_t$};
		\node[draw,circle] at (qc) (q2) {$p_t$};
		\node[] at (pc) (p) {$t$};
		\draw[line width=1pt,red] ($ 0.5*(k2) + 0.5*(mc2)$) -- +(0.0,-0.75) node[draw=none,below] {$E_1$};
		\draw[line width=1pt,red] ($ 0.5*(k2) + 0.5*(q2)$) -- +(0.5669467,0.42521003213538067) node[draw=none,right] {$E_2$};
		\draw[line width=1pt,red] ($ 0.5*(q2) + 0.5*(mc2)$) -- +(-0.75,0.0) node[draw=none,left] {$E_3$};
		\draw[line width=2pt] (mc2) -- (k2);
		\draw[line width=2pt] (mc2) -- (q2);
		\draw[line width=2pt] (k2) -- (q2);
		\end{scope}
		\end{tikzpicture}
	\end{minipage}
	\caption{\label{fig:localtriangles} Local indices of an element $t$ and its corresponding reflected element $t^m$ (left). An element $t$ with the three edges (right).}
\end{figure}
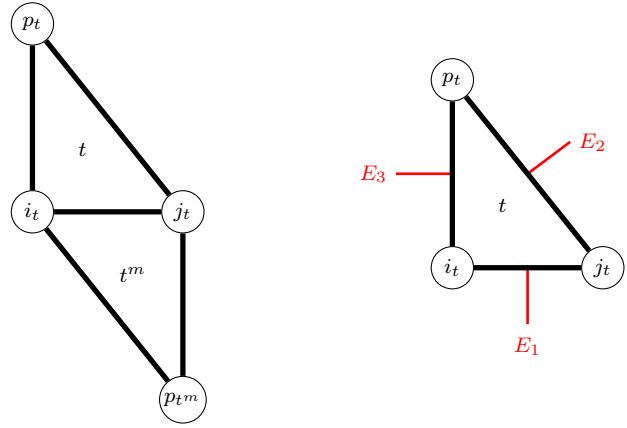
We denote the three edges of an element $t \in \mesh_h$ by $E_1$, $E_2$, and $E_3$ as illustrated in \cref{fig:localtriangles}.
Since $\phi_{i} = 0$ on $E_2$ and $\grad \phi_{j} \cdot \bm{\tau} = 0$ on $E_3$, the integral is reduced to
\begin{align}
\gamma^{(i,j);t} = \frac{1}{2}\integral{\partial \microtet}{}{\phi_{i} \grad \phi_{j} \cdot \bm{\tau}}{s} &= \frac{1}{2}\integral{E_1}{}{\phi_{i} \grad \phi_{j} \cdot \bm{\tau}}{s} = \frac{1}{2 |E_1|}\integral{E_1}{}{\phi_{i}}{s} = \frac{1}{4}.
\end{align}
This constant antisymmetric part $a^{a;t}_{ij}$ needs to be scaled by a difference of coefficients evaluated at the vertices. Using the notation from \cref{fig:localtriangles}, the difference is obtained by
\begin{align}
\bar{\coeff}^t - \bar{\coeff}^{t^m} = \frac{1}{3} \left( \coeff \left(\bfx_{p_t}\right)- \coeff \left(\bfx_{p_{t^m}}\right) \right).
\end{align}
Finally, the correction term evaluates to
\begin{align}
\label{eqn:correction2d}
\left(\corrterm^\tet(\coeff)\right)_{ij} = \frac{1}{12} \left( \coeff \left(\bfx_{p_t}\right)- \coeff \left(\bfx_{p_{t^m}}\right) \right) \begin{pmatrix}
0 & -1\\
1 & 0
\end{pmatrix}.
\end{align}
Note that in the 2D case, the correction term is independent of the geometry, thus no extra information has to be stored in memory.
As can be seen in the next subsection, this is not the case in 3D anymore.

\subsection{Correction term in 3D}
\label{sec:corr3d}
In the 3D case, the uniform grid refinement rule following \cite{bey1995tetrahedral} yields three sub-classes of tetrahedra for each macro element. We denote each of these classes by a color, namely gray, blue, and green, cf.,~left and second from left in \cref{fig:subtypes_and_stencil}. We always associate the class corresponding to the macro element to the gray color. The remaining classes are arbitrarily associated to the colors blue and green.
This uniform refinement results in a stencil which is the same for each interior node of a macro element. The resulting stencil is illustrated in the right of \cref{fig:subtypes_and_stencil}.
We denote the edges adjacent to elements of classes blue and green only, as edges of gray type, cf.,~third from left in \cref{fig:subtypes_and_stencil}.
The edges of green and gray type are defined similarly: An edge of blue type is adjacent to only elements of class green and gray, whereas an edge of green type is adjacent to only elements of class blue and gray.
All other remaining edges are denoted as red-type edges.
\begin{figure}\centering
\begin{minipage}{0.22\textwidth}
\includegraphics[width=\textwidth]{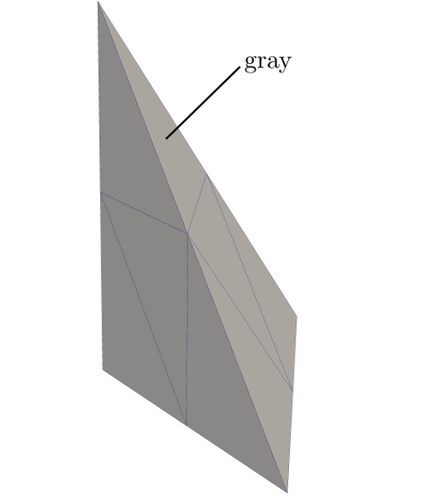}
\end{minipage}\begin{minipage}{0.22\textwidth}
\includegraphics[width=\textwidth]{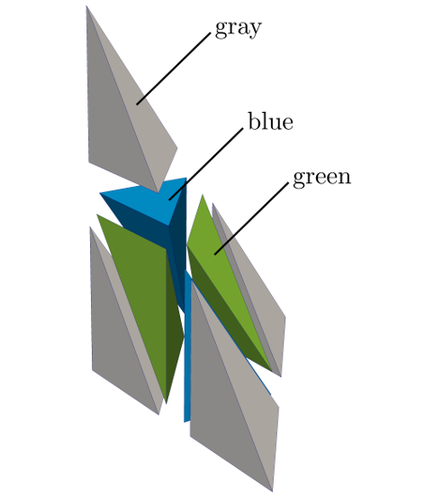}
\end{minipage}\begin{minipage}{0.22\textwidth}
\includegraphics[width=\textwidth]{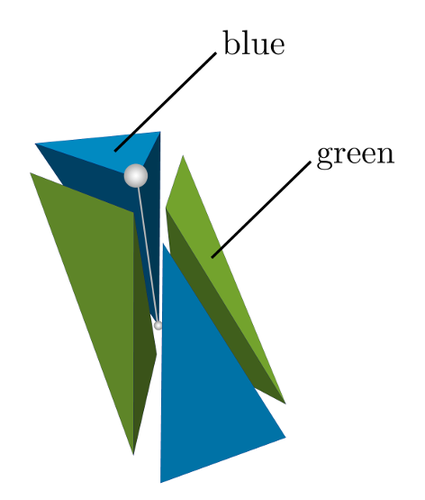}
\end{minipage}\begin{minipage}{0.23\textwidth}
\resizebox {\textwidth} {!} {
\begin{tikzpicture}[x={(0:1cm)}, y={(90:1cm)}, z={(210:0.5cm)}]
\tikzset{font={\fontsize{8pt}{12}\selectfont}}

\begin{scope}[auto, shift={(0,0)}, scale=2,every node/.style={draw,circle,minimum size=1.6em,inner sep=1},node distance=2cm]

\node[draw,circle] at (-1,0,0) (mw) {\texttt{mw}};
\node[draw,circle] at (0,0,0) (mc) {$i$};
\node[draw,circle] at (1,0,0) (me) {\texttt{me}};
\node[draw,circle] at (-1,0,-1) (mnw) {\texttt{mnw}};
\node[draw,circle] at (0,0,-1) (mn) {\texttt{mn}};
\node[draw,circle] at (0,0,1) (ms) {\texttt{ms}};
\node[draw,circle] at (1,0,1) (mse) {\texttt{mse}};

\node[draw,circle] at (0,1,1) (ts) {\texttt{ts}};
\node[draw,circle] at (1,1,1) (tse) {\texttt{tse}};
\node[draw,circle] at (-1,1,0) (tw) {\texttt{tw}};
\node[draw,circle] at (0,1,0) (tc) {\texttt{tc}};

\node[draw,circle] at (0,-1,0) (bc) {\texttt{bc}};
\node[draw,circle] at (1,-1,0) (be) {\texttt{be}};
\node[draw,circle] at (-1,-1,-1) (bnw) {\texttt{bnw}};
\node[draw,circle] at (0,-1,-1) (bn) {\texttt{bn}};

\draw[line width=2pt, red] (mc) -- (me);
\draw[line width=2pt, red] (mc) -- (mnw);
\draw[line width=2pt, ForestGreen] (mc) -- (mn);
\draw[line width=2pt, red] (mc) -- (ts);
\draw[line width=2pt, gray] (mc) -- (tse);
\draw[line width=2pt, RoyalBlue] (mc) -- (tw);
\draw[line width=2pt, red] (mc) -- (tc);
\draw[line width=2pt, red] (mc) -- (bc);
\draw[line width=2pt, RoyalBlue] (mc) -- (be);
\draw[line width=2pt, gray] (mc) -- (bnw);
\draw[line width=2pt, red] (mc) -- (bn);
\draw[line width=2pt, ForestGreen] (mc) -- (ms);
\draw[line width=2pt, red] (mc) -- (mse);
\draw[line width=2pt, red] (mc) -- (mw);

\draw (mnw) -- (mw) -- (ms) -- (mse) -- (me) -- (mn) -- (mnw);
\draw (tc) -- (tw) -- (ts) -- (tse) -- (tc);
\draw (bc) -- (be) -- (bn) -- (bnw) -- (bc);

\end{scope}
\end{tikzpicture}
}
\end{minipage}\caption{\label{fig:subtypes_and_stencil}Uniform refinement of one \revised{macro element} (left) into three sub-classes (second from left). Gray edge adjacent to blue and green sub-tetrahedra (third from left). Stencil at an interior node $i$ of a macro tetrahedron with off-center nodes $j \in \lbrace \ttme, \ttmnw, \ttmn, \ttts, \tttse, \tttw, \tttc, \ttbc, \ttbe, \ttbnw, \ttbn, \ttms, \ttmse, \ttmw \rbrace$ (right).}
\end{figure}In contrast to the 2D case, the general structure of the antisymmetric part of the local stiffness tensor $\localstiff^{a;t}$ \revised{as defined in \cref{eq:symmmantisymm}} for an element $t \in \mesh_h^{i,j;T}$ in 3D is \revised{determined} by three independent values $\gamma$, $\beta$, and $\delta$:
\begin{align}
\label{eqn:antisymmetric3d}
\localstiff^{a;t}_{ij} = \begin{pmatrix}
0 & -\gamma^{(i,j);t} & -\beta^{(i,j);t}\\
\gamma^{(i,j);t} & 0 & -\delta^{(i,j);t}\\
\beta^{(i,j);t} & \delta^{(i,j);t} & 0
\end{pmatrix}.
\end{align}
Let \revised{$\v{n}(x) = (n_1(x), n_2(x), n_3(x))^\top$} be the outward pointing unit-normal of an element $t \in \mesh_h$ \revised{for $x \in
	\partial t$}.
In the case of $D\v{u} = \strain(\v{u})$ and $\coeff = 1$, the non-zero components of~\cref{eqn:antisymmetric3d} evaluate to
\begin{small}
\begin{align}
\beta^{(i,j);t} &= \integral{\microtet}{}{\strain\left(\phi_{j} \bfe_1 \right) : \strain\left( \phi_{i} \bfe_3 \right) - \strain\left( \phi_{j} \bfe_3 \right) : \strain\left( \phi_{i} \bfe_1 \right)}{x}
= \frac{1}{2}\integral{\microtet}{}{\phi_{j,z} \phi_{i,x} - \phi_{j,x} \phi_{i,z}}{x}\\
&= -\frac{1}{2}\integral{\microtet}{}{\phi_{j,xz} \phi_{i} - \phi_{j,xz} \phi_{i}}{x} + \frac{1}{2}\integral{\partial \microtet}{}{\phi_{j,z} \phi_{i} n_1 - \phi_{j,x} \phi_{i} n_3}{s}
= \frac{1}{2}\integral{\partial \microtet}{}{\phi_{i} \left(\phi_{j,z} n_1 - \phi_{j,x} n_3\right)}{s}.
\end{align}
\end{small}
Similarly, the other components may be rewritten in terms of boundary integrals
\begin{align}
\gamma^{(i,j);t} = \frac{1}{2}\integral{\partial \microtet}{}{\phi_{i} \left(\phi_{j,y} n_1 - \phi_{j,x} n_2\right)}{s}\;\;\;\text{and}\;\;\;\delta^{(i,j);t} = \frac{1}{2}\integral{\partial \microtet}{}{\phi_{i} \left(\phi_{j,z} n_2 - \phi_{j,y} n_3\right)}{s}.
\end{align}
Equally as in 2D, in the case that $D\v{u} = \div \v{u}$, the values of all three variables are the same and only the sign is flipped.
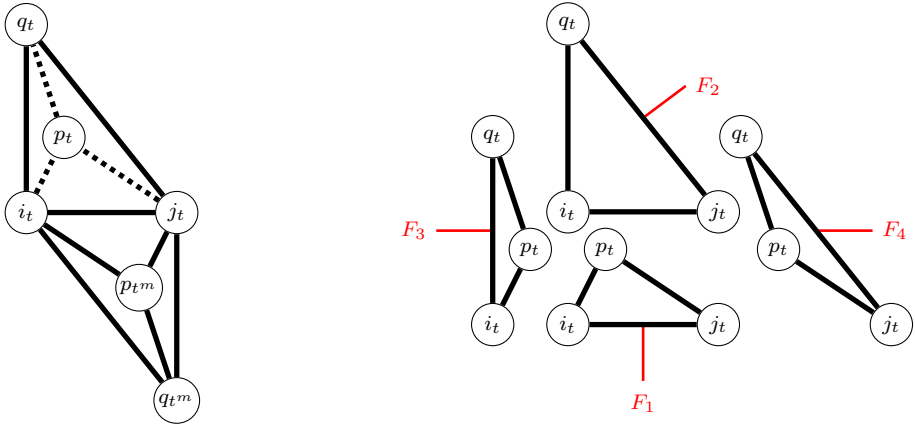
\begin{figure}\centering
\begin{minipage}{0.2\textwidth}
\centering
\begin{tikzpicture}
\tikzset{font={\fontsize{8pt}{12}\selectfont}}
\begin{scope}[auto, every node/.style={draw,circle,minimum size=1.6em,inner sep=1},node distance=2cm]

\coordinate (mcc) at (-2,0);
\coordinate (kc) at (0,0);
\coordinate (pc) at (-1.5,1);
\coordinate (qc) at (-2,2.5);
\coordinate (pprimec) at (-0.5,-1);
\coordinate (qprimec) at (0,-2.5);

\node[draw,circle] at (mcc) (mc) {$i_t$};
\node[draw,circle] at (kc) (k) {$j_t$};
\node[draw,circle] at (pc) (p) {$p_t$};
\node[draw,circle] at (qc) (q) {$q_t$};
\node[draw,circle] at (pprimec) (pprime) {$p_{t^m}$};
\node[draw,circle] at (qprimec) (qprime) {$q_{t^m}$};

\draw[line width=2pt] (k) -- (mc);
\draw[line width=2pt] (mc) -- (q);
\draw[line width=2pt] (q) -- (k);
\draw[line width=2pt, dotted] (mc) -- (p);
\draw[line width=2pt, dotted] (p) -- (q);
\draw[line width=2pt, dotted] (k) -- (p);

\draw[line width=2pt] (mc) -- (pprime);
\draw[line width=2pt] (mc) -- (qprime);
\draw[line width=2pt] (k) -- (pprime);
\draw[line width=2pt] (k) -- (qprime);
\draw[line width=2pt] (pprime) -- (qprime);

\end{scope}
\end{tikzpicture}
\end{minipage}
\hfill
\begin{minipage}{0.77\textwidth}
\centering
\begin{tikzpicture}
\tikzset{font={\fontsize{8pt}{12}\selectfont}}
\begin{scope}[auto, every node/.style={draw,circle,minimum size=1.6em,inner sep=1},node distance=2cm]

\coordinate (mcc) at (-2,0);
\coordinate (kc) at (0,0);
\coordinate (pc) at (-1.5,1);
\coordinate (qc) at (-2,2.5);
\coordinate (pprimec) at (-0.5,-1);
\coordinate (qprimec) at (0,-2.5);

\begin{scope}
\node[draw,circle] at ($(mcc)+(-1, 0)$) (mc3) {$i_t$};
\node[draw,circle] at ($(pc)+(-1, 0)$) (p3) {$p_t$};
\node[draw,circle] at ($(qc)+(-1, 0)$) (q3) {$q_t$};
\draw[line width=1pt,red] ($ 0.5*(mc3) + 0.5*(q3)$) -- +(-0.75,0) node[draw=none,left] {$F_3$};
\draw[line width=2pt] (mc3) -- (q3);
\draw[line width=2pt] (mc3) -- (p3);
\draw[line width=2pt] (p3) -- (q3);
\end{scope}

\begin{scope}
\node[draw,circle] at (mcc) (mc1) {$i_t$};
\node[draw,circle] at (kc) (k1) {$j_t$};
\node[draw,circle] at (pc) (p1) {$p_t$};
\draw[line width=1pt,red] ($ 0.5*(mc1) + 0.5*(k1)$) -- +(0,-0.75) node[draw=none,below] {$F_1$};
\draw[line width=2pt] (mc1) -- (k1);
\draw[line width=2pt] (mc1) -- (p1);
\draw[line width=2pt] (k1) -- (p1);
\end{scope}

\begin{scope}
\node[draw,circle] at ($(kc)+(2.3, 0)$) (k4) {$j_t$};
\node[draw,circle] at ($(pc)+(2.3, 0)$) (p4) {$p_t$};
\node[draw,circle] at ($(qc)+(2.3, 0)$) (q4) {$q_t$};
\draw[line width=1pt,red] ($ 0.5*(q4) + 0.5*(k4)$) -- +(0.75,0) node[draw=none,right] {$F_4$};
\draw[line width=2pt] (k4) -- (p4);
\draw[line width=2pt] (k4) -- (q4);
\draw[line width=2pt] (p4) -- (q4);
\end{scope}

\begin{scope}
\node[draw,circle] at ($(mcc)+(0, 1.5)$) (mc2) {$i_t$};
\node[draw,circle] at ($(kc)+(0, 1.5)$) (k2) {$j_t$};
\node[draw,circle] at ($(qc)+(0, 1.5)$) (q2) {$q_t$};
\draw[line width=1pt,red] ($ 0.5*(k2) + 0.5*(q2)$) -- +(0.5669467,0.42521003213538067) node[draw=none,right] {$F_2$};
\draw[line width=2pt] (mc2) -- (k2);
\draw[line width=2pt] (mc2) -- (q2);
\draw[line width=2pt] (k2) -- (q2);
\end{scope}

\end{scope}
\end{tikzpicture}
\end{minipage}
\caption{\label{fig:localtets} Local indices of an element $t$ and its corresponding reflected element $t^m$ (left). Exploded view of an element $t$ depicting the four faces (right).}
\end{figure}
Let $i_t$, $j_t$, $p_t$, and $q_t$ be the vertex indices of a tetrahedron $\microtet \in \mesh_h$, where $i_t$ and $j_t$ correspond to the global nodes $i$ and $j$; see \cref{fig:localtets}.
Additionally, the corresponding vertex coordinates of these nodes are denoted by an $\bfx$ with a subscript.
The four faces of $\microtet$ are defined by the following triplets of vertices
\begin{align}
F_1 \equiv \{i_t, j_t, p_t\}, \;
F_2 \equiv \{i_t, j_t, q_t\}, \;
F_3 \equiv \{i_t, p_t, q_t\}, \text{ and}\;
F_4 \equiv \{j_t, p_t, q_t\}.
\end{align}
Since $\phi_{i} = 0$ on $F_4$ and $\left(\v{n} \times \grad{\phi_j}\right) = 0$ on $F_3$, applying Stokes' theorem yields
\begin{align}
\begin{pmatrix}
\phantom{-}\delta^{(i,j);t}\\
-\beta^{(i,j);t}\\
\phantom{-}\gamma^{(i,j);t}
\end{pmatrix}&= \frac{1}{2} \integral{\partial \microtet}{}{\phi_i \cdot \left(\v{n} \times \grad{\phi_j}\right)}{s} = \frac{1}{6} \sum_{f=1}^{2} \integral{F_f}{}{\v{n} \times \grad{\phi_j}}{s}\\
&= \frac{1}{6} \sum_{f=1}^{2} \begin{pmatrix}
\integral{F_f}{}{\v{n} \cdot \left( \grad{} \times \revised{\phi_j} \bfe_1\right)}{s}\\
\integral{F_f}{}{\v{n} \cdot \left( \grad{} \times \revised{\phi_j} \bfe_2\right)}{s}\\
\integral{F_f}{}{\v{n} \cdot \left( \grad{} \times \revised{\phi_j} \bfe_3\right)}{s}
\end{pmatrix} = \frac{1}{6} \sum_{f=1}^{2} \begin{pmatrix}
\integral{\partial F_f}{}{\bm{\tau} \cdot \left(\revised{\phi_j} \bfe_1\right)}{s}\\
\integral{\partial F_f}{}{\bm{\tau} \cdot \left(\revised{\phi_j} \bfe_2\right)}{s}\\
\integral{\partial F_f}{}{\bm{\tau} \cdot \left(\revised{\phi_j} \bfe_3\right)}{s}
\end{pmatrix}
= \frac{1}{12} \left(\bfx_{p_t} - \bfx_{q_t}\right),
\end{align}
where $\bm{\tau}$ is the unit tangent. The antisymmetric part of the local stiffness tensor then reduces to
\begin{align}
\label{eqn:localcorr3d}
\localstiff^{a;t}_{ij} = \frac{1}{12}\begin{pmatrix}
0 & (\bfx_{q_t} - \bfx_{p_t})_3 & (\bfx_{p_t} - \bfx_{q_t})_2\\
(\bfx_{p_t} - \bfx_{q_t})_3 & 0 & (\bfx_{q_t} - \bfx_{p_t})_1\\
(\bfx_{q_t} - \bfx_{p_t})_2 & (\bfx_{p_t} - \bfx_{q_t})_1 & 0
\end{pmatrix}.
\end{align}

\begin{figure}\centering
\begin{minipage}{0.3\textwidth}
\centering
\begin{tikzpicture}
\tikzset{font={\fontsize{8pt}{12}\selectfont}}
\begin{scope}[auto, every node/.style={draw,circle,minimum size=1.6em,inner sep=1},node distance=2cm]
\node[draw,circle] at (-2,0) (mw) {$i$};
\node[draw,circle] at (0,0) (mc) {$j$};
\node[draw,circle] at (-1.5,1) (mnw) {$p$};
\node[draw,circle] at (-2,2.5) (tw) {$q$};
\node[draw,circle] at (-0.5,-1) (ms) {$p^m$};
\node[draw,circle] at (0,-2.5) (bc) {$q^m$};

\draw[line width=2pt, red] (mc) -- (mw);
\draw[line width=2pt, gray, dotted] (mw) -- (mnw);
\draw[line width=2pt, gray, dotted] (mnw) -- (mc);
\draw[line width=2pt, gray] (mw) -- (tw);
\draw[line width=2pt, gray, dotted] (tw) -- (mnw);
\draw[line width=2pt, gray] (mc) -- (tw);

\draw[line width=2pt, gray] (mw) -- (ms);
\draw[line width=2pt, gray] (mc) -- (ms);
\draw[line width=2pt, gray] (mw) -- (bc);
\draw[line width=2pt, gray] (ms) to (bc);
\draw[line width=2pt, gray] (mc) -- (bc);
\end{scope}
\end{tikzpicture}
\end{minipage}
\begin{minipage}{0.3\textwidth}
\centering
\begin{tikzpicture}
\tikzset{font={\fontsize{8pt}{12}\selectfont}}
\begin{scope}[auto, every node/.style={draw,circle,minimum size=1.6em,inner sep=1},node distance=2cm]
\node[draw,circle] at (-2,0) (mw) {$i$};
\node[draw,circle] at (0,0) (mc) {$j$};
\node[draw,circle] at (0,-2.5) (bc) {$q^m$};
\node[draw,circle] at (-1.5,-1.5) (bnw) {$r^m$};
\node[draw,circle] at (-2,2.5) (tw) {$q$};
\node[draw,circle] at (-0.5,1.5) (ts) {$r$};

\draw[line width=2pt, red] (mc) -- (mw);
\draw[line width=2pt, ForestGreen] (ts) -- (tw);
\draw[line width=2pt, ForestGreen] (mc) -- (ts);
\draw[line width=2pt, ForestGreen, dotted] (mc) -- (tw);
\draw[line width=2pt, ForestGreen] (mw) -- (ts);
\draw[line width=2pt, ForestGreen] (mw) -- (tw);

\draw[line width=2pt, ForestGreen, dotted] (mc) -- (bnw);
\draw[line width=2pt, ForestGreen] (bc) -- (bnw);
\draw[line width=2pt, ForestGreen] (bnw) -- (mw);
\draw[line width=2pt, ForestGreen] (mw) -- (bc);
\draw[line width=2pt, ForestGreen] (mc) -- (bc);
\end{scope}
\end{tikzpicture}
\end{minipage}
\begin{minipage}{0.3\textwidth}
\centering
\begin{tikzpicture}
\tikzset{font={\fontsize{8pt}{12}\selectfont}}
\begin{scope}[auto, every node/.style={draw,circle,minimum size=1.6em,inner sep=1},node distance=2cm]
\node[draw,circle] at (-2,0) (mw) {$i$};
\node[draw,circle] at (0,0) (mc) {$j$};
\node[draw,circle] at (-1.5,1) (mnw) {$p$};
\node[draw,circle] at (-1.5,-1.5) (bnw) {$r^m$};
\node[draw,circle] at (-0.5,-1) (ms) {$p^m$};
\node[draw,circle] at (-0.5,1.5) (ts) {$r$};

\node[draw opacity=0.0] (dummy_tw) at (-2,2.5) {};
\node[draw opacity=0.0] (dummy_bc) at (0,-2.5) {};

\draw[line width=2pt, RoyalBlue, dotted] (mnw) -- (mc);
\draw[line width=2pt, RoyalBlue] (mw) -- (bnw);
\draw[line width=2pt, RoyalBlue] (mnw) -- (mw);
\draw[line width=2pt, RoyalBlue] (bnw) -- ($(bnw)!0.5!(mc)$);
\draw[line width=2pt, RoyalBlue, dotted] ($(bnw)!0.5!(mc)$) -- (mc);
\draw[line width=2pt, RoyalBlue, dotted] (bnw) -- (mnw);

\draw[line width=2pt, red, dotted] (mc) -- (mw);
\draw[line width=2pt, RoyalBlue] (mc) -- (ts);
\draw[line width=2pt, RoyalBlue] (mc) -- (ms);
\draw[line width=2pt, RoyalBlue] (mw) -- (ts);
\draw[line width=2pt, RoyalBlue] (mw) -- (ms);

\draw[line width=2pt, RoyalBlue] (ms) -- (ts);
\end{scope}
\end{tikzpicture}
\end{minipage}
\caption{\label{fig:rededgetets} Tetrahedra adjacent to an edge of type red with local indexing of the neighboring nodes.}
\end{figure}
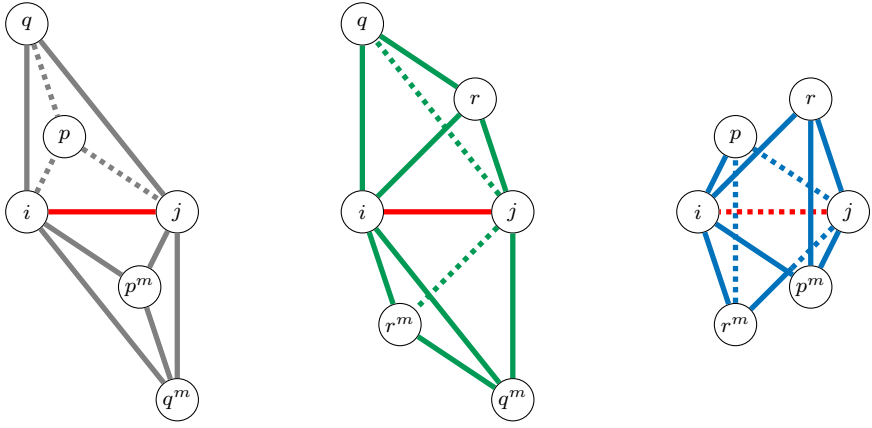
In \cref{fig:rededgetets}, the six elements adjacent to an edge of red type are shown. Each of these tetrahedra belongs to a class which we denote by the colors gray, green, and blue. In each color class, we have eight tetrahedra adjacent to the inner mesh node $i$. We define $\localstiff^{a;t}_{ij}$ to be zero, if elements of the same class as $t$ are not adjacent to an edge which is only the case at blue, gray, and green type edges. We introduce a local indexing of the nodes surrounded by the edge as depicted in \cref{fig:rededgetets} in the following considerations.

Scaling and summing over all elements adjacent to the edge through mesh nodes $i$ and $j$ yields
\begin{small}
\begin{align}
\label{eqn:corr3d1}
\left(\corrterm^\tet(\coeff)\right)_{ij}
= (\bar{k}^{t_{gray}} - \bar{k}^{t_{gray}^m}) a^{a;t_{gray}}_{ij} + (\bar{k}^{t_{green}} - \bar{k}^{t_{green}^m}) a^{a;t_{green}}_{ij} + (\bar{k}^{t_{blue}} - \bar{k}^{t_{blue}^m}) a^{a;t_{blue}}_{ij}.
\end{align}
\end{small}
Since
\begin{align}
\bar{\coeff}^t - \bar{\coeff}^{t^m} = \frac{1}{4}(\coeff(\bfx_{p_t}) + \coeff(\bfx_{q_t}) - \coeff(\bfx_{q_{t^m}}) - \coeff(\bfx_{p_{t^m}})),
\end{align}
we can rewrite~\cref{eqn:corr3d1} by combining terms using the index notation from \cref{fig:rededgetets} as
\begin{align}
\left(\corrterm^\tet(\coeff)\right)_{ij} =& \frac{1}{4}(\coeff(\bfx_{p}) + \coeff(\bfx_{q}) - \coeff(\bfx_{p^m}) - \coeff(\bfx_{q^m})) a^{a;t_{gray}}_{ij}\\
& + \frac{1}{4}(\coeff(\bfx_{q}) + \coeff(\bfx_{r}) - \coeff(\bfx_{q^m}) - \coeff(\bfx_{r^m}))a^{a;t_{green}}_{ij} \\
& + \frac{1}{4}(\coeff(\bfx_{r}) + \coeff(\bfx_{p}) - \coeff(\bfx_{r^m}) - \coeff(\bfx_{p^m})) a^{a;t_{blue}}_{ij}.
\end{align}
After eliminating common sub-expressions, the three additional stencils are defined as
\begin{small}
\begin{align}
\mathcal{S}_{ij}^{T;1} = \frac{1}{4}\left(a_{ij}^{a;t_{gray}} + a_{ij}^{a;t_{blue}}\right),\;\mathcal{S}_{ij}^{T;2} = \frac{1}{4}\left(a_{ij}^{a;t_{gray}} + a_{ij}^{a;t_{green}}\right), \text{ and } \mathcal{S}_{ij}^{T;3} = \frac{1}{4} \left(a_{ij}^{a;t_{green}} + a_{ij}^{a;t_{blue}}\right).
\end{align}
\end{small}
To simplify the notation, we rename the following variables according to \cref{fig:rededgetets} as follows:
\begin{align}
\coeff^{(1)}_{\mathcal{S}^1} = \coeff(\bfx_{p}),\;\coeff^{(2)}_{\mathcal{S}^1} = \coeff(\bfx_{p^m}),\;
\coeff^{(1)}_{\mathcal{S}^2} = \coeff(\bfx_{q}),\;\coeff^{(2)}_{\mathcal{S}^2} = \coeff(\bfx_{q^m}),\;
\coeff^{(1)}_{\mathcal{S}^3} = \coeff(\bfx_{r}), \text{ and } \coeff^{(2)}_{\mathcal{S}^3} = \coeff(\bfx_{r^m}).
\end{align}This yields the following form of the correction term $R^\tet$ in 3D:
\begin{align}
\label{eq:correctionterm3D}
\left(\corrterm^\tet(\coeff)\right)_{ij} = \left(\coeff^{(1)}_{\mathcal{S}^1}-\coeff^{(2)}_{\mathcal{S}^1}\right)\cdot\mathcal{S}^{\tet;1}_{ij} + \left(\coeff^{(1)}_{\mathcal{S}^2}-\coeff^{(2)}_{\mathcal{S}^2}\right)\cdot\mathcal{S}^{\tet;2}_{ij} + \left(\coeff^{(1)}_{\mathcal{S}^3}-\coeff^{(2)}_{\mathcal{S}^3}\right)\cdot\mathcal{S}^{\tet;3}_{ij}.
\end{align}
Ultimately, in addition to the constant reference stencil $\refstencil^T$, we need to store three stencils $\mathcal{S}^{T;1}$, $\mathcal{S}^{T;2}$, and $\mathcal{S}^{T;3}$ per \revised{macro element} in memory. Each of these stencils needs to be scaled appropriately to obtain a computationally cheaper approximation of the bilinear form~\cref{eqn:nodalintegration}.

\section{Mapping piecewise constant coefficients to nodal values}
\label{sec:nodalcoeff}
In many physical applications, the coefficient typically depends on the strain rate $|D(\v{u})|^2$ of the velocity $\v{u}$ and is therefore a constant on each element when using a linear finite element discretization. Since we rely on a stencil based implementation with coefficient values attached to the nodes in $\mesh_h$, we shall discuss efficient techniques to map piecewise constant values to nodal values.
\revised{In this way, index computations that are needed to access the discrete solution can be reused to access the coefficient.
Furthermore, this method is inspired by the ZZ error estimator \cite{zienkiewicz1992superconvergent} where the piecewise constant gradient of a piecewise linear function is lifted to a continuous gradient.
Under certain assumptions, this may improve the accuracy of the coefficient.}
The straight-forward approach would be to find the best approximation of $|D(\v{u})|^2$ in the space of piecewise linear and globally continuous functions with respect to the $L^2$ norm. This, however, involves solving a global linear system and thus is too costly when the coefficient changes after each iteration when solving non-linear problems. Therefore, we refrain from a global $L^2$ projection and focus only on a local technique that is better suited for efficient parallel processing.
One possibility is to assign to each node the volume weighted average of $|D(\v{u})|^2$ over its adjacent elements.
However, we present an alternative method which we also use in our numerical experiments.

In this approach, the discrete function $\v{u}$ is locally projected to an affine linear or quadratic function $\tilde{\v{u}}$ and its derivative is evaluated in order to obtain an approximate value of $|D(\v{u})|^2$ at a node $\bfx_i$.
Let $\mathcal{P}_m(\omega_i)$ be the space of polynomials of order $m$ on the patch $\omega_i$.
The $j$-th component of $\tilde{\v{u}}$ is obtained by solving the following minimization problem
\begin{align}
\label{eq:minimization_problem}
\tilde{u}_j = \argmin_{p \in \mathcal{P}_m(\omega_i)} \sum_{i \in \mathcal{I}_i}\left(p(\bfx_i) - u_j(\bfx_i)\right)^2 \text{ for } m \in \{1,2\},
\end{align}
where $\mathcal{I}_i$ is the index set, containing all indices of the nodal patch $\omega_i$.
Recall that in the case of a uniform refinement in 3D, this involves 15 nodes.
Solving this minimization problem corresponds to solving a small least squares problem for each node $\bfx_i$.
The approximate value of $|D(\v{u})|^2$ evaluated at $\bfx_i$ is then given by $|D(\tilde{\v{u}})(\bfx_i)|^2$,
since by construction $D(\tilde{\v{u}})$ is continuous on $\omega_i$.
As before, the coefficient $\coeff_i$ is obtained in a point-wise fashion according to the physical model.

In the interior of a macro element, the quadratic and affine linear approximations are equivalent.
Particularly, the quadratic minimizing polynomial $p_2$ of \cref{eq:minimization_problem} on a patch $\omega_i$ in the interior of a macro element may be written as $p_2(\bfx) = (\bfx - \bfx_i)^\top \sfA (\bfx - \bfx_i) + \bfb^\top (\bfx - \bfx_i) + c$ for some $\sfA \in \reals^{d\times d}$, $\bfb \in \reals^d$, and $c \in \reals$.
Similarly, a minimizing affine linear function on the same $\omega_i$ may be written as $p_1(\bfx) = \tilde{\bfb}^\top (\bfx - \bfx_i) + \tilde{c}$ for some $\tilde{\bfb} \in \reals^d$, and $\tilde{c} \in \reals$.
Due to the symmetry of the nodes in $\omega_i$, the quadratic and linear parts are decoupled and it follows that $\bfb = \tilde{\bfb}$ and $c = \tilde{c}$.
The derivatives of $p_2$ and $p_1$ are given by $\nabla p_2(\bfx) = \sfA (\bfx - \bfx_i) + \bfb$ and $\nabla p_1(\bfx) = \tilde{\bfb}$.
Evaluating the derivatives at $\bfx_i$, we obtain $\nabla p_2(\bfx_i) = \bfb = \tilde{\bfb} = \nabla p_1(\bfx_i)$.
Therefore, the quadratic approximation is only required on the lower dimensional primitives and the computationally much cheaper affine linear approximation may be used in the interior of macro elements.

\section{Computational cost analysis}
\label{sec:performance}
Since the stencil scaling approach for vector-valued PDEs has been introduced as means to reduce the computational cost for matrix-free finite element implementations, we will present a concise cost analysis.
Asymptotically, most of the computational work is done in the interior of \revised{macro elements}.
Therefore, we restrict our performance analysis to the interior of a single \revised{macro element}.
Furthermore, we ignore all performance impacts stemming from the required communication between processes and focus the analysis on multiple independent processes on a single compute node.
We start with an estimation of the number of required operations to compute the residual $\sfy = \sff - \sfA\sfx$ where the matrix $\sfA$ results from 
a discretization of the vector-valued PDEs with a single scalar coefficient.
Additionally, theoretical estimates on the required memory and the memory traffic are given which are validated by experimental measurements in \Cref{sec:memorytrafficroofline}.
\subsection{Number of operations}
We start by counting the number of operations to compute the residual $\sfy = \sff - \sfA\sfx$ when using either of the presented methods or when storing all the stencils in memory which corresponds to storing the global matrix $\sfA$.
In the case of nodal integration, we assume that the local stiffness matrices (two in 2D and 6 in 3D) are pre-computed and stored in memory.
\revised{In the scaling approach}, we assume that the reference stencils and the additional correction stencils are stored in memory instead.
\revised{We do not pre-compute and store values like the average of $k$ in \cref{eq:patchmean} or the differences of $k$ in \cref{eq:correctionterm3D}.
Since the stencil code is already memory bound, storing only the nodal values of $k$ and computing the averages and differences on-the-fly compared to reading the pre-computed values from memory increases the arithmetic intensity and reduces the required memory and the pressure on the memory buses.}
The required numbers of operations for the different methods are summarized in \cref{tab:operations}.
Please note that these numbers only give estimates on the actual number of instructions performed by the processor since optimizing compilers may reorder, fuse, and vectorize FLOPs, meaning that multiple FLOPs may be performed in a single cycle.
We also ignore the effect of fused multiply-add operations that are typical for most modern CPU architectures.

Let $\sfy_i \in \reals^d$ be the target vector components at position $i$, let $\sfx_i \in \reals^d$ be the input vector components at position $i$, and let $\sff_i \in \reals^d$ be the components of the right-hand side vector at position $i$. 
Furthermore, let $S_{ij} \in \reals^{d \times d}$ be the stencil which acts on a vector at position $j$ in order to obtain the result at position $i$.
The residual for all the degrees of freedom (DoFs) at position $i$ is computed via
\begin{align}
\label{eqn:residuallinalg}
\sfy_i = \sff_i - \sum_{j}^{} S_{ij} \sfx_j.
\end{align}
Assuming that all the $S_{ij}$ for a fixed $i$ are already computed, the number of flops to evaluate \cref{eqn:residuallinalg} is the same for all \revised{three} approaches, as can be seen in the fifth column of \cref{tab:operations}.
In 2D, there are 7 stencils $S_{ij}$ for a fixed $i$, thus 7 local matrix vector multiplications have to be performed and the results are added which results in a total of 26 additions and 28 multiplications. The subtraction from the right-hand side takes 2 extra additions.
Since there are 15 stencils in 3D, similar considerations yield that $15\cdot9 = 135$ multiplications need to be performed.
The number of additions is made up of $15 \cdot 2 \cdot 3 = 90$ additions in the matrix-vector products, $14 \cdot 3 = 42$ additions in the sum over its results, and 3 additions from the subtraction of the right-hand side.

\revised{In the following, we estimate the number of required operations to compute the stencil entries $S_{ij}$ for the matrix-free variants and begin with the physical scaling case.}

\revised{
\subsubsection{Number of operations in the physical scaling case}
Recall that in the unphysical scaling case the stencil is defined as $S_{ij} = \hat{\coeff}_{ij}^\tet \hat{S}_{ij}$ for $j \neq i$.
The central entry for $j = i$ is defined in a way to enforce the zero row sum property, i.e., $S_{ii} = -\sum_{j \neq i}^{} S_{ij}$.

Computing a stencil entry for a fixed $i$ and $j$ with $j \neq i$, requires to compute the value of $\hat{\coeff}_{ij}^\tet$ and scaling the reference stencil.
The calculation of $\hat{\coeff}_{ij}^\tet$ requires 2 multiplications and 3 additions in 2D and in 3D the number of operations depends on the number of elements adjacent to the edge through the nodes $i$ and $j$.
Since we are interested in an upper bound only, we assume the worst case of 2 multiplications and 7 additions.
Finally, due to symmetry, the scaling requires $\frac{d(d+1)}{2}$ multiplications.
These values need to be multiplied by the number of off-center stencils which results in the numbers shown in the third column of \cref{tab:operations}.
Computing the central entry requires 12 additions in each component, totaling in 24 additions in 2D.
In 3D, the number of additions per component is given by 41, resulting in a total of 123 additions.

In order to complete the cost consideration of the physical scaling, the cost of the correction term needs to be assessed.
In 2D, the correction term \cref{eqn:correction2d} just consists of the scaled difference of two coefficient values, which results in a total of 6 subtractions and 6 multiplications. Adding the correction term to the scaled reference stencil requires 12 additional additions.

For the 3D case, recall that the physically scaled stencil is defined as follows: $$S_{ij} = \hat{\coeff}_{ij}^\tet\cdot \hat{S}_{ij} + \left(\coeff^{(1)}_{\mathcal{S}^1}-\coeff^{(2)}_{\mathcal{S}^1}\right)\cdot\mathcal{S}^{\tet;1}_{ij} + \left(\coeff^{(1)}_{\mathcal{S}^2}-\coeff^{(2)}_{\mathcal{S}^2}\right)\cdot\mathcal{S}^{\tet;2}_{ij} + \left(\coeff^{(1)}_{\mathcal{S}^3}-\coeff^{(2)}_{\mathcal{S}^3}\right)\cdot\mathcal{S}^{\tet;3}_{ij}$$ for $j \neq i$.
There we have 3 correction terms with 3 unique non-zero entries for red edges and 2 correction terms with 3 unique non-zero entries for the remaining edges which need to be scaled.
The scaling term of each correction stencil requires one addition only.
This leads to $3 \cdot 3 = 9$ extra multiplications and $3 + 3 \cdot 3 = 12$ additions per red-edge stencil entry.
For the edges of other color $2 \cdot 3 = 6$ extra multiplications and $2 + 2 \cdot 3 = 8$ additions are needed.
Since there are 8 red edges and 6 other edges per stencil, the total number of operations in the third column of \cref{tab:operations} is obtained.}

\subsubsection{Number of operations in the nodal integration case}
\revised{For the number of required operations in the nodal integration case, we recall some of the calculations from the scalar case in \cite{scalarstencilscaling}.
There, the total number of operations is reduced by eliminating common sub-expressions to compute the coefficient value at the quadrature point.
The number of additions required to obtain the non-central
stencil entries in the scalar case is 15 in 2D and 98 in 3D.
In the vector-valued case, almost all these numbers need to be multiplied by 4 in 2D or 9 in 3D, only the sums of the coefficients are computed once per updated node.
The computation of the common sub-expressions in 2D requires 9 additions and 16 in 3D.
Since the common sub-expressions of the coefficient only need to be computed once per node, this results in a total of $4\cdot(15 - 9) + 9 = 33$ in 2D and $9\cdot(98 - 16) + 16 = 754$ in 3D.
The number of operations for the multiplications is obtained by just multiplying the number of scalar operations by 4 or 9, yielding $4 \cdot 12 = 48$ in 2D and $9 \cdot 72 = 648$ in 3D.
In our case, the central entries are not computed by the computationally cheaper method of enforcing the zero row-sum property because the rigid body mode kernel is preserved in this way.
These values are obtained by manually counting the number of operations for the central entries, yielding the values presented in the third and fourth row of \cref{tab:operations}.}

\subsubsection{Number of operations in the stored stencils case}
In this scenario, the whole global matrix $\sfA$ is stored in memory.
Therefore, we assume that no costs are involved in computing the stencil entries and only the operations to compute the residual are required.
Note that this scenario is the preferred one with respect to the number of operations but it consumes the most memory and it has the largest impact on memory traffic from main memory, cf.~\Cref{sec:memorytheory}.

\subsubsection{Comparison of total required operations}
The theoretical analysis of the required operations 
yields estimates of how much CPU time could be saved 
in case that the memory bandwidth is not limited and that
the overhead stemming from index calculations is ignored.
As can be seen, the savings in FLOPs are minor in 2D, but in 3D
they are quite significant.
For 2D, \cref{tab:operations} shows that the physical scaling requires 82\% of the FLOPs that are needed by the on-the-fly nodal integration. In 3D, the physical scaling requires 41\% of the FLOPs needed by the nodal integration.
However, as can be seen in the measurements in \Cref{sec:memorytrafficroofline}, the compiler reduces the number of theoretically estimated FLOPs.
\revised{Using the values reported by the Intel Advisor, the physical scaling requires 44\% of the FLOPs needed by the nodal integration.}

\begin{table}\footnotesize
	\centering
	\caption{\label{tab:operations}Operation count for residual computation.}
	\begin{tabular}{c|c|c|c|c|c}
		\toprule
		method & dimension & non-central entries & central entry & residual & total\\
		\midrule
						\multirow{2}{*}{physical scaling} & 2D & 36 add / 36 mul & 24 add / 0 mul & 28 add / 28 mul & 152\\
		& 3D & 242 add / 220 mul & 123 add / 0 mul & 135 add / 135 mul & 855\\ \midrule
		\multirow{2}{*}{nodal integration} & 2D & 33 add / 48 mul & 24 add / 24 mul & 28 add / 28 mul & 185\\
		& 3D & 754 add / 648 mul & 207 add / 216 mul & 135 add / 135 mul & 2095\\ \midrule
		\multirow{2}{*}{stored stencils} & 2D & 0 add / 0 mul & 0 add / 0 mul & 28 add / 28 mul & 56\\
		& 3D & 0 add / 0 mul & 0 add / 0 mul & 135 add / 135 mul & 270\\
		\bottomrule
	\end{tabular}
\end{table}

\subsection{Memory consumption and memory access}
\label{sec:memorytheory}
For the best performance it is not only required that the number of FLOPs is small, but also the memory traffic from main memory has to be small relative to the required FLOPs.
Therefore, we first give a short summary on the required number of double precision variables that are needed for a residual computation in the interior of a single \revised{macro element} in \cref{tab:memorytotal}, where $N$ is the number of scalar degrees of freedom in the interior of a single \revised{macro element}.
The third column summarizes the number of variables required to store the discretized functions $\sff$, $\sfx$, $\sfy$, and $\sfk$.
The fourth column summarizes the number of variables required to store the discretized operator $\sfA$.
Note that only for the stored stencils approach the memory required to store the operator grows with the mesh size.
The total memory footprint is worst for the stored stencils approach.
In this scenario 135 extra scalar variables must be stored, a number that
would alternatively permit an extra level of refinement of the mesh when using
one of the matrix-free approaches.
Even if storing all stencils is cheapest in terms of FLOPs, it
creates a severe restriction
on the size of the problems that can be solved and
it leads to a very large
amount of data that must be transferred \revised{{from main memory}} in each matrix-vector product.

In \Cref{tab:memoryperdof}, we present estimates on the average number of bytes which need to be loaded from and stored in main memory to compute the residual at a single mesh node in 3D.
\revised{We split the estimation into two extreme cases.}
In the optimistic case, we assume perfect caching and that all previously loaded values stay in the fast cache levels.
In the pessimistic case, we assume no caching at all and that all the data has to be loaded from the slow main memory.
This analysis gives lower and upper bounds on the required main memory traffic and the value observed in practice will lie somewhere between these bounds.
Note that stores and loads of temporary variables required for the computation of the stencil weights are not considered in these estimates.
\revised{These values only present estimates for the number of bytes that must be transferred from main memory,
but they are not necessarily proportional to the time required to load and store them.
In modern architectures, data is moved in terms of cache lines
that may for example be 64 byte large.
If numerical data is stored contiguously, successive values can be accessed more efficiently from cache
lines that are already loaded.
Furthermore, modern CPU microarchitectures employ prefetching that can accelerate the
access to regularly strided data.
A detailed analysis of such effects on the speed of numerical kernels, as
presented in, e.g.,~\cite{alappat2020understanding}, is beyond the scope of this article.
In \cref{tab:memoryperdof}, we present estimated values for the bytes to be transferred in
an optimistic and a pessimistic scenario.}

For the matrix-free variants, the pre-computed stencil values or local stiffness matrices need to be loaded.
\revised{In the physical scaling case, the 15 reference stencils weights for 9 block operators are required which results in a total of 1080 bytes.
Additionally, the 3 additional correction stencils need to be loaded, resulting in 4320 bytes.}
In the nodal integration case, 6 local stiffness matrices with 16 entries each need to be loaded for each of the 9 operators, resulting in 6912 bytes.
In the optimistic case, these data stays in the caches and is loaded from main memory only in the pessimistic case.

Only one coefficient has to be loaded in the optimistic case, 
but in the worst case all 15 coefficients adjacent to a mesh node need to be loaded from main memory which results in 120 bytes per mesh node.
In the stored stencils approach, for each mesh node all the 15 stencil weights for 9 operators need to be loaded even in the optimistic case.

Additionally, the variables $\sff$, $\sfx$, and $\sfy$ are accessed during an iteration.
In the optimistic and pessimistic cases, $24$ bytes of $\sff$ need to be loaded from main memory.
Additionally, because of write allocation, $24$ bytes from $\sfy$ need to be loaded before they are stored, resulting in traffic of $48$ bytes.
Re-using cached values of $\sfx$ in the optimistic case requires loading $24$ bytes, but in the pessimistic case all $15$ neighboring values need to be loaded, resulting in $360$ bytes.

These estimates show that with poor cache re-use
the matrix-free approaches must be expected
to produce even more main memory traffic than the stored stencils approach.
However, when the layer condition is satisfied for the data traversal
and the caches are used efficiently, the matrix-free methods may lead to a reduced main memory traffic.
\begin{table}
\centering
\caption{\label{tab:memorytotal}Number of double precision variables required on a single \revised{macro element} with $N$ scalar degrees of freedom for a residual computation.}
\begin{tabular}{c|c|c|c}
\toprule
method & dimension & variables ($\sff$, $\sfx$, $\sfy$, $\sfk$) & operators \\
\midrule
\multirow{2}{*}{physical scaling} & 2D & $7 \cdot N$ & 28 \\
& 3D & $10 \cdot N$ & 540 \\ \midrule
\multirow{2}{*}{nodal integration} & 2D & $7 \cdot N$ & 72 \\
& 3D & $10 \cdot N$ & 864 \\ \midrule
\multirow{2}{*}{stored stencils} & 2D & $7 \cdot N$ & $28 \cdot N$ \\
& 3D & $10 \cdot N$ & $135 \cdot N$ \\
\bottomrule
\end{tabular}
\end{table}

\begin{table}
\centering
\caption{\label{tab:memoryperdof}Average number of bytes required to load from and store to main memory when computing the residual at a mesh node in 3D assuming the usage of 64-bit double precision floating point variables.}
\begin{tabular}{c|c|c}
	\toprule
	method & optimistic & pessimistic \\
	\midrule
	physical scaling & $8 + 96 = 104$ & $4320 + 120 + 432 = 4872$ \\
    nodal integration & $8 + 96 = 104$ & $6912 + 120 + 432 = 7464$ \\
	stored stencils & $1080 + 96 = 1176$ & $1080 + 432 = 1512$ \\
	\bottomrule
\end{tabular}
\end{table}

\section{Numerical results and applications}
\label{sec:numerical_results}
In this section, we provide numerical results to illustrate the accuracy and run-time of the new scaling approaches in comparison to the assembly by nodal integration in a matrix-free framework.
\revised{Throughout this section, we denote the time-to-solution by
tts and by the relative tts, we denote the ratio of the time-to-solution of the stencil-scaling approach
with respect to the nodal integration.
The numerical solutions obtained by the corresponding bilinear forms $a_h(\cdot,\cdot)$ and $\hat{a}_h(\cdot,\cdot)$ are always denoted by $\v{u}_h$ and $\hat{\v{u}}_h$, respectively.}

\revised{We use two machines to obtain the run-time measurements presented in the following subsections.
Most of the measurements are conducted on the newer SuperMUC-NG system equipped with Skylake nodes.
The following values were taken from \cite{SuperMUCNG:Configuration:Details}.
Each node has two Intel Xeon Platinum 8174 processors with a nominal clock rate of 3.1 GHz.
Each processor has 24 physical cores which results in 48 cores per node.
Each core has a dedicated L1 (data) cache of size 32\,kB and a dedicated L2 cache of size 1024\,kB.
Each of the two processors has a L3 cache of size 33\,MB shared across all its cores.
The total main memory of 94\,GB is split into equal parts across two NUMA domains with one processor each.
We use the Intel 19.0 compiler together with the Intel 2019 MPI library and specify the compiler flags \texttt{-O3 -march=native -xHost}.

The second machine we use for some measurements, is the older SuperMUC Phase~2 system equipped with Haswell nodes.
The following values were taken from \cite{SuperMUC:Configuration:Details}.
Each node has two Intel\textsuperscript{\textregistered} Xeon\textsuperscript{\textregistered} E5-2697 v3 processors with a nominal clock rate of 2.6 GHz.
Each processor has 14 physical cores which results in 28 cores per node.
Each core has a dedicated L1 (data) cache of size 32\,kB and a dedicated L2 cache of size 256\,kB.
The theoretical bandwidths are 343\,GB/s and 92\,GB/s, respectively.
The CPUs are running in cluster-on-die mode.
Thus, each node represents four NUMA domains each consisting of 7 cores with a separate L3 cache of size 18\,MB and a theoretical bandwidth of 39\,GB/s.
On top of this, each NUMA domain has 16 GB of main memory with a theoretical bandwidth of 6.7\,GB/s available.
On this second machine, we use the Intel 18.0 compiler together with the Intel 2018 MPI library and specify the compiler flags \texttt{-O3 -march=native -xHost}.
Note that the serial runs using only a single compute core are not limited to run on large machines like SuperMUC but can also be run on usual modern desktop workstations with enough memory.

All the following experiments were implemented in the HHG framework \cite{bergen2005hierarchical,bergen2004hierarchical,bergen2007hierarchical}.
If not otherwise specified, we solve the linear systems by applying geometric multigrid V-cycles directly to the system until a specified relative tolerance of the norm of the residual is obtained.
The transfers from a coarser to a finer grid are performed by a matrix-free linear interpolation and the restriction is performed by the corresponding transposed matrix-free operation.
As a smoother, we employ the hybrid Gauss-Seidel method, meaning that in the interior of macro elements standard Gauss-Seidel iterations are performed. Across the interfaces not all dependencies are updated which results in a Jacobi-like method.
On the coarse grid, we perform iterations of the diagonally preconditioned conjugate gradient method up to a fixed large relative tolerance or for a fixed number of iterations in order to avoid unnecessary over solving.}

\subsection{Linear elastostatics}\revised{In this subsection, we first perform benchmarks to verify the performance models from \Cref{sec:performance}, and we consider two problems in linear elasticity.} The first \revised{problem} is a benchmark problem where we have an analytical solution at hand and can compute the discretization errors directly. In the second one, a more relevant problem is investigated, where an external force is applied to a metal foam.
\subsubsection{Memory traffic and roofline analysis}
\label{sec:memorytrafficroofline}
In \Cref{sec:memorytheory}, we presented theoretical estimates on the number of floating point operations and the memory accesses required to compute the residual of a linear system using different strategies to obtain the matrix entries.
In this subsection, we verify these results experimentally using a specially designed benchmark, executed on a single compute node of \revised{SuperMUC-NG}.
With this benchmark, we compare the performance of the in \Cref{sec:memorytheory} \revised{analyzed} methods, i.e., the physical stencil scaling, standard nodal integration, and the stored stencils approach.
\revised{The floating-point performance measurements were conducted using the Intel Advisor 2019 \cite{intel-advisor} and the memory traffic measurements were conducted by accessing the hardware performance counters using the Intel VTune Amplifier 2019 \cite{intel-vtune}.}

The benchmark computes the residual $y = f-Ax$, for a vector-valued operator $A$ \revised{in 3D with the same sparsity pattern as the discretized linear elasticity operator}.
As in the theoretical analysis, we only consider the DoFs in the interior of macro tetrahedra.
The residual computation is iterated 500 times in order to obtain an averaged value reducing errors stemming from small fluctuations in the run-time.
The benchmark is executed using \revised{48} MPI \revised{ranks}, pinned to the \revised{48} physical cores of a single node. 
This is essential to avoid optimistic bandwidth values when only a single core accesses the memory.
Measurements with the Intel Advisor \revised{and VTune Amplifier} are carried out solely on rank 0.
Moreover, all measurements are restricted to the inner-most update loop, i.e., where the actual nodal updates take place.
This does not influence the results since the outer loops are identical in all variants.
\revised{Note that in each update, the DoFs corresponding to a single mesh node are updated at once, i.e.,~three DoFs per update.}
We choose $L=5$ as refinement level, which yields $1.09\cdot10^6$ DoFs per \revised{macro element}.
The computation involves three vector-valued variables $x$, $y$, and, $f$ where each of them requires about \SI{8.4}{\mebi\byte} of storage per macro tetrahedron.
In addition to this, the scalar valued coefficient $k$ requires about \SI{2.8}{\mebi\byte} of storage.

We assign \revised{three macro elements} to each MPI \revised{rank} which is the maximum possible for the stored stencils approach \revised{on SuperMUC-NG}.
In practice, the memory limit of a compute node would be reached even faster, since all the lower-dimensional primitives, the multigrid hierarchy, and the communication buffers require extra memory.
Using these settings, each inner-most loop is executed \revised{$500\,062\,500$} times per MPI \revised{rank}.
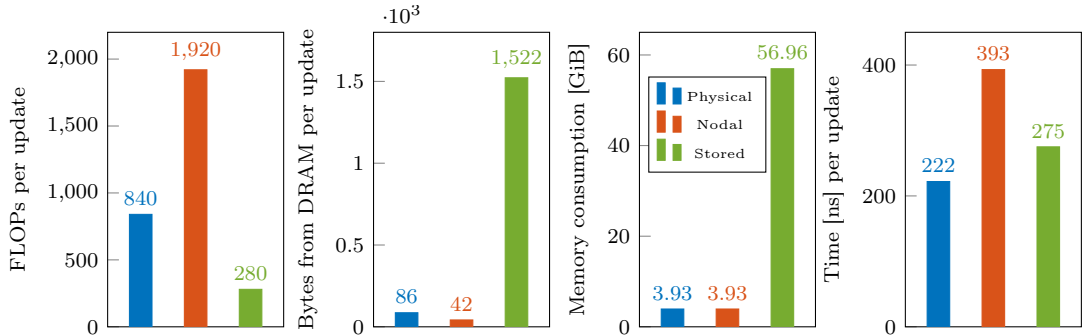
\begin{figure}
	\centering
	\begin{tikzpicture}
	\begin{groupplot}[
	group style={
		group name=performance,
		group size=4 by 1,
		horizontal sep=34pt,
							},
	tickpos=left,
	ytick align=inside,
	xtick align=inside
	]
	\nextgroupplot[font=\footnotesize,
	ymin=0,
	height=5.5cm,
	x=3.6cm,
	enlarge x limits={abs=0.45cm},
	bar width=0.3cm,
	ybar,
	xmajorticks=false,
	nodes near coords,
	xmin=0.0,xmax=0.4,
	ylabel={FLOPs per update},
	ymax=2200]
	\addplot+[color1] coordinates {(0.1,840)};
	\addplot+[color2] coordinates {(0.2,1920)};
	\addplot+[color5] coordinates {(0.3,280)};
	\nextgroupplot[font=\footnotesize,
	ymin=0,
	height=5.5cm,
	x=3.6cm,
	enlarge x limits={abs=0.45cm},
	bar width=0.3cm,
	ybar,
	xmajorticks=false,
	nodes near coords,
	xmin=0.0,xmax=0.4,
	ylabel={Bytes from DRAM per update},
	legend style={anchor=north west},
	ymax=1800,
	scaled y ticks={base 10:-3},
	]
	\addplot+[color1] coordinates {(0.1,86)};
	\addplot+[color2] coordinates {(0.2,42)};
	\addplot+[color5] coordinates {(0.3,1522)};
	\nextgroupplot[font=\footnotesize,
	legend style={font=\tiny},
	ymin=0,
	height=5.5cm,
	x=3.6cm,
	enlarge x limits={abs=0.45cm},
	bar width=0.3cm,
	ybar,
	xmajorticks=false,
	nodes near coords,
	xmin=0.0,xmax=0.4,
	ylabel={Memory consumption [\si{\gibi\byte}]},
	legend style={at={(0.05,0.85)},anchor=north west},
	ymax=65]
	\addplot+[color1] coordinates {(0.1,3.9292368)};
	\addlegendentry{Physical};
	\addplot+[color2] coordinates {(0.2,3.9292368)};
	\addlegendentry{Nodal};
	\addplot+[color5] coordinates {(0.3,56.96064)};
	\addlegendentry{Stored};
	\nextgroupplot[font=\footnotesize,
	ymin=0,
	height=5.5cm,
	x=3.6cm,
	enlarge x limits={abs=0.45cm},
	bar width=0.3cm,
	ybar,
	xmajorticks=false,
	nodes near coords,
	xmin=0.0,xmax=0.4,
	ylabel={Time [\si{\nano\second}] per update},
	ymax=450]
	\addplot+[color1] coordinates {(0.1,222)};
	\addplot+[color2] coordinates {(0.2,393)};
	\addplot+[color5] coordinates {(0.3,275)};
	\end{groupplot}
	\end{tikzpicture}
	\caption{\label{fig:performancebars}\revised{Bar plots for comparing the performance and memory consumption of the nodal integration, physical stencil scaling, and stored stencils approaches on SuperMUC-NG. Three macro tetrahedra are assigned to each MPI rank.}}
\end{figure}

\begin{figure}
\centering
\begin{tikzpicture}
\begin{groupplot}[
group style={
	group name=performance,
	group size=2 by 1,
	horizontal sep=50pt,
			},
tickpos=left,
ytick align=inside,
xtick align=inside
]
\nextgroupplot[font=\footnotesize,
ymin=0,
height=5.5cm,
x=3.6cm,
enlarge x limits={abs=0.45cm},
bar width=0.3cm,
ybar,
xmajorticks=false,
nodes near coords,
xmin=0.0,xmax=0.4,
ylabel={Time [\si{\nano\second}] per update},
ymax=800,
title={SuperMUC Phase 2}]
\addplot+[color1] coordinates {(0.1,248)};
\addplot+[color2] coordinates {(0.2,717)};
\addplot+[color5] coordinates {(0.3,265)};
\nextgroupplot[font=\footnotesize,
ymin=0,
height=5.5cm,
x=3.6cm,
enlarge x limits={abs=0.45cm},
bar width=0.3cm,
ybar,
xmajorticks=false,
nodes near coords={\pgfmathfloatifflags{\pgfplotspointmeta}{0}{}{\pgfmathprintnumber{\pgfplotspointmeta}}},
xmin=0.0,xmax=0.4,
ylabel={Time [\si{\nano\second}] per update},
ymax=800,
title={SuperMUC-NG},
legend style={at={(1.05,0.85)},anchor=north west}]
\addplot+[color1] coordinates {(0.1,222)};
\addlegendentry{Physical};
\addplot+[color2] coordinates {(0.2,393)};
\addlegendentry{Nodal};
\addplot+[color5] coordinates {(0.3,0)};
\node[color=color5,rotate=90] at (axis cs: 0.4,390) {out of memory};
\addlegendentry{Stored};
\end{groupplot}
\end{tikzpicture}
\caption{\label{fig:performancecomparison}\revised{Time per update on different machines with four macro tetrahedra attached to each MPI rank. Left: SuperMUC Phase 2. Right: SuperMUC-NG.}}
\end{figure}
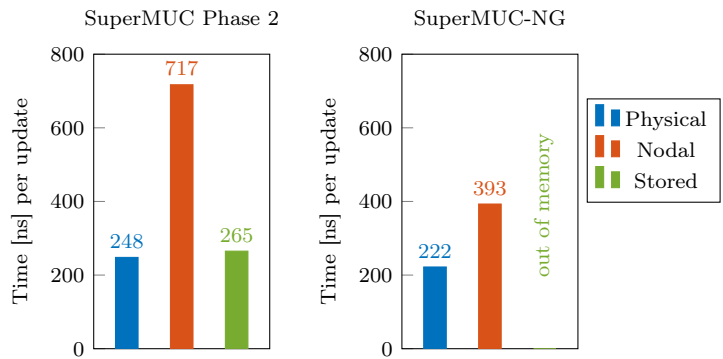

In \cref{fig:performancebars}, we summarize the recorded performance results of the three approaches.
In the leftmost plot of \cref{fig:performancebars}, the FLOPs per update are shown which are close to the theoretically estimated values from~\cref{tab:operations}.
The second from left plot presents the total number of transferred bytes \revised{from main memory} per update.
The total memory consumption is shown in the third from left plot.
The stored stencils approach requires almost 15 times more memory than the matrix-free approaches.
\revised{At first sight, the stored stencils approach looks the most attractive with respect to the required operations when enough main memory is available.
However, the rightmost plot shows that the physical scaling approach has a slightly lower time per update than the stored stencils approach.
This is due to the large amount of data which needs to be transferred from main memory in each update, cf.~second from left plot in \cref{fig:performancebars}.
This means that in fact the caches are more efficiently used in the matrix-free approaches.
Note that these measured values are also close to the theoretically estimated values reported in \cref{tab:memoryperdof}.
}

\revised{The same benchmark was conducted on SuperMUC Phase 2, but with four macro tetrahedra assigned to each of the 28 MPI ranks.
This was possible, since this machine has more memory available per core than SuperMUC-NG.
In \cref{fig:performancecomparison}, we contrast the time per update on SuperMUC Phase 2 with the time per update on SuperMUC-NG using equal problem sizes per MPI rank.
Note that the matrix-free methods both worked on SuperMUC-NG, but the stored stencil approach required too much memory.
Furthermore, the time per update of the physical scaling did not improve much, but the time per update of the nodal on-the-fly integration was reduced by about $45$\%.
This is due to the increased clock rate of SuperMUC-NG compared to SuperMUC Phase 2 and the larger arithmetic intensity of the on-the-fly integration.
}

In order to \revised{further visualize these results}, we present a roofline analysis in \cref{fig:roofline} \revised{conducted on SuperMUC-NG}; see \cite{Ilic:2013:CAL,Williams2009}.
The abscissa shows the arithmetic intensity, i.e.,~the number of FLOPs divided by the number of bytes loaded and stored in the inner-most loop. 
The ordinate gives the measured performance as FLOPs performed per second.
For reference, we added measured saturated memory bandwidth rooflines as reported by the Intel Advisor.
\revised{Obviously, these measured values are smaller than the theoretically
optimal ones given in the hardware specifications.
The maximum performance for double precision vectorized fused multiply-add operations is reported by the Intel Advisor tool as 72.39\,GFLOPs/s.
From the roofline analysis one can see that the nodal integration yields the best performance with respect to FLOPs per second, but is still slower in practice since it requires more than twice the number of operations compared to the physical scaling.
The physical scaling has a smaller arithmetic intensity and a slightly worse performance in GFLOPs/s while the stored stencils approach has the lowest arithmetic intensity with the worst performance.
The compiler could not auto-vectorize the physical scaling and stored stencils kernels.
In the nodal integration kernel, however, one of the fixed-length inner loops could be auto-vectorized.
This explains why the performance of the nodal integration is slightly better than the double precision scalar add performance.}
Of course, the roofline analysis 
constitutes only a first quantitative evaluation
of the performance. Other performance models, like the execution-cache-memory performance model \cite{stengel2015quantifying}, 
can give deeper insight.
\begin{remark}
\revised{As can be seen in \cref{fig:roofline}, the physical scaling approach reaches about $5.25$\% of the peak performance based on double precision vector fused multiply add instructions.
However, considering other rooflines, the physical scaling almost reaches the double precision scalar add performance and reaches about $10.43$\% of the double precision vector add performance.
Further performance optimizations are difficult because of the not ideal mix of multiplies and adds and the challenging vectorization due to the index calculations in tetrahedral elements.
These investigations and performance optimizations are beyond the scope of this paper but are part of the future work and the ongoing development of the software structures in HyTeG \cite{Kohl2018HyTeGfinite}.}
\end{remark}
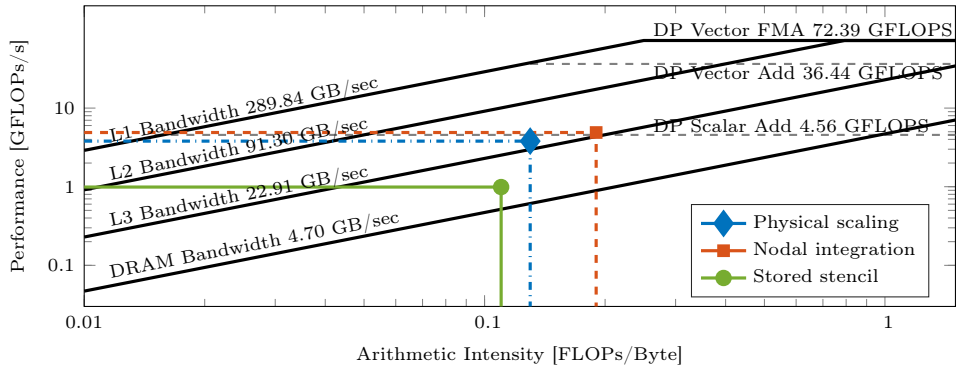
\begin{figure}
	\centering
		\begin{tikzpicture}[font=\scriptsize]
\begin{axis}[width=0.8\textwidth,
height=4cm,
at={(0.769in,0.576in)},
scale only axis,
xmode=log,
xmin=0.01,
xmax=1.5,
xtick={0.01,0.1,1},
xticklabels={{0.01},{ 0.1},{   1}},
xminorticks=true,
xlabel style={font=\color{white!15!black},font=\scriptsize},
xlabel={Arithmetic Intensity [FLOPs/Byte]},
ymode=log,
ymin=0.03,
ymax=200,
ytick={0.1,1,10},
yticklabels={{0.1},{  1},{ 10}},
yminorticks=true,
ylabel style={font=\color{white!15!black},font=\scriptsize},
ylabel={Performance [GFLOPs/s]},
axis background/.style={fill=white},
legend style={at={(0.97,0.03)}, anchor=south east, legend cell align=left, align=left, draw=white!15!black}
]

\addplot [color=gray, line width=0.8pt, forget plot, dashed]
table[row sep=crcr]{	0.13	36.44\\
	1.5	36.44\\
};
\addplot [color=gray, line width=0.8pt, forget plot, dashed]
table[row sep=crcr]{	0.015	4.56\\
	1.5	4.56\\
};

\addplot [color=black, line width=1.2pt, forget plot]
table[row sep=crcr]{	0.01	2.8984\\
	0.249758487441347	72.39\\
	1.5	72.39\\
};
\addplot [color=black, line width=1.2pt, forget plot]
table[row sep=crcr]{	0.01	0.913\\
	0.792880613362541	72.39\\
	1.5	72.39\\
};
\addplot [color=black, line width=1.2pt, forget plot]
table[row sep=crcr]{	0.01	0.2291\\
	3.15975556525535	72.39\\
	1.5	72.39\\
};
\addplot [color=black, line width=1.2pt, forget plot]
table[row sep=crcr]{	0.01	0.047\\
	15.4021276595745	72.39\\
	1.5	72.39\\
};

\addplot [color=color1, dashdotted, line width=1.2pt, forget plot]
table[row sep=crcr]{	0.01	3.8\\
	0.13	3.8\\
	0.13	0.03\\
};
\addplot [color=color1, line width=1.2pt, draw=none, mark size=4.3pt, mark=diamond*, mark options={solid, fill=color1, color1}]
table[row sep=crcr]{	0.13	3.8\\
};
\addlegendentry{Physical scaling}

\addplot [color=color2, dashed, line width=1.2pt, forget plot]
table[row sep=crcr]{	0.01	4.89\\
	0.19	4.89\\
	0.19	0.03\\
};
\addplot [color=color2, line width=1.2pt, draw=none, mark size=1.8pt, mark=square*, mark options={solid, fill=color2, color2}]
table[row sep=crcr]{	0.19	4.89\\
};
\addlegendentry{Nodal integration}

\addplot [color=color5, line width=1.2pt, forget plot]
table[row sep=crcr]{	0.01	0.99\\
	0.11	0.99\\
	0.11	0.03\\
};
\addplot [color=color5, line width=1.2pt, draw=none, mark size=2.5pt, mark=*, mark options={solid, fill=color5, color5}]
table[row sep=crcr]{	0.11	0.99\\
};
\addlegendentry{Stored stencil}

\node[right, align=left, rotate=11.5]
at (axis cs:0.011,4) {L1 Bandwidth 289.84 GB/sec};
\node[right, align=left, rotate=11.5]
at (axis cs:0.011,1.3) {L2 Bandwidth 91.30 GB/sec};
\node[right, align=left, rotate=11]
at (axis cs:0.011,0.34) {L3 Bandwidth 22.91 GB/sec};
\node[right, align=left, rotate=11]
at (axis cs:0.011,0.075) {DRAM Bandwidth 4.70 GB/sec};
\node[right, align=left]
at (axis cs:0.25,100.01) {DP Vector FMA 72.39 GFLOPS};

\node[right, align=left]
at (axis cs:0.25,28) {DP Vector Add 36.44 GFLOPS};

\node[right, align=left]
at (axis cs:0.25,6) {DP Scalar Add 4.56 GFLOPS};
\end{axis}
\end{tikzpicture}
		\caption{\label{fig:roofline}Roofline analysis of the residual computation using nodal integration, physical stencil scaling, and stored stencils approaches.}
\end{figure}
\subsubsection{Linear elastostatics benchmark problem}
As a first benchmark problem, we consider a compressible linear elasticity problem on the unit cube $\Omega = (0,1)^3$ modeled by \cref{eqn:strongform} with $\Gamma_\text{D} = \partial \Omega$ and $\Gamma_\text{N} = \emptyset$. The material of the block is assumed to be isotropic and heterogeneous with a varying elastic modulus $E$ but constant Poisson's ration $\nu$. In this scenario, the stress tensor $\stress = 2\mu \strain + \lambda \; \trace{\strain} \identity$ is given by Hooke's law and the Lam\'{e} constants $\mu$ and $\lambda$ are
\begin{align}
\mu(E) = \frac{E}{2(1+\nu)}, \;\; \text{and} \;\; \lambda(E) = \frac{\nu E}{(1+\nu)(1-2\nu)}.
\end{align}
Since the stress tensor $\stress$ depends linearly on $E$, we factor it out and rewrite the stress tensor such that it depends only on the single spatially variable coefficient $E$, i.e.,
\begin{align}
\label{eqn:stressE}
\stress = E(x,y,z) \cdot \left(\frac{1}{(1+\nu)}\strain  + \frac{\nu}{(1+\nu)(1-2\nu)} \trace{\strain} \identity\right).
\end{align}
The associated bilinear form in the \emph{constant case} $E = 1$ is thus given by a linear combination of the discussed forms, yielding
\begin{align}
\label{eqn:bilinearformE}
a^{E=1}(\v{u}, \v{v}) = \frac{1}{(1+\nu)} \scpd[\Omega]{\strain\left(\v{u}\right)}{\strain\left(\v{v}\right)} + \frac{\nu}{(1+\nu)(1-2\nu)} \scpd[\Omega]{\div \v{u}}{\div \v{v}}.
\end{align}
The scaling is then performed on the bilinear form \cref{eqn:bilinearformE} with $E$ as the varying scalar coefficient.
In the following, we perform a quantitative comparison of the three approaches by investigating their accuracy and run-time.
For this purpose, we let $\v{u}^\ast$ be a manufactured solution and set the right hand side $\v{f}$ of \cref{eqn:strongform} accordingly to $\v{f} = -\div \stress(\v{u}^\ast)$. The Dirichlet boundary condition is set to $\v{g} = \restr{\v{u}^\ast}{\partial \Omega}$. This allows for a direct computation of errors and a quantitative study on accuracy of the different methods.
By $\mathcal{I}_h$, we denote the interpolation operator of a function on the mesh $\mesh_h$ and by $\ltwonorm{\cdot}$ the discrete $L^2$ norm defined as
\begin{align}
\ltwonorm{\v{u}} = \left(h^3 \sum_{i \in \mathcal{N}_h} \eucnorm{\v{u}(\v{x}_i)}^2\right)^\frac{1}{2},
\end{align}
where $\mathcal{N}_h$ is the set of all vertices in  the mesh $\mesh_h$.
As material parameters, we choose the Poisson's ratio of Aluminum, i.e., $\nu = 0.34$, and a Young's modulus of the following form
\begin{align}
E(x,y,z) = \cos(m\, \pi\, x\, y\, z) + 2,\; m \in \{ 1,2,3,8 \}.
\end{align}
The manufactured solution $\v{u}^\ast$ is chosen as
\begin{align}
\v{u}^\ast(x,y,z) = \frac{1}{x y z + 1} \begin{pmatrix}
x^{3} y + z^{2}\\
x^{4} y + 2 z\\
3 x + y z^{3}
\end{pmatrix}.
\end{align}
It is important to note that the coefficient and exact solution do not lie in the ansatz spaces and therefore cannot be exactly reproduced.

We discretize the computational domain by $384$ tetrahedra on the coarsest level $\ell = 0$. The finest level considered in this subsection is $L = 6$. Each system of equations is solved using a single \revised{rank on SuperMUC-NG} and by employing a geometric $V(3,3)$ multigrid solver until a relative \revised{residual} of $10^{-8}$ is obtained.
\revised{As a smoother, we employ the hybrid Gauss-Seidel method and on the coarsest level, we employ a diagonally preconditioned conjugate gradient method since the problem is symmetric and positive definite.}

In \cref{tab:benchmarkphysical_and_unphysical}, we report on the errors, convergence rates, \revised{number of required V-cycle iterations}, and run-times for different refinement levels $\ell$ and coefficient parameters $m$. The error on level $\ell$ is defined as $\ltwonorm{\mathcal{I}_{h_L} \v{u}^\ast - \mathcal{I}_{h_L}\v{v}_{h_\ell}}$, where $\v{v}_{h_\ell}$ denotes the numerical solution obtained with one of the \revised{two matrix-free approaches, i.e., $\v{u}_h$ and $\hat{\v{u}}_h$.
We do not consider the stored stencil approach in this comparison, since the problem sizes on the finest level are too large and the matrices could not be stored in memory}.
We observe quadratic convergence in the discrete $L^2$ norm for the assembly through nodal integration and the physical scaling. \revised{On the last level $L = 6$, the convergence rate is higher since here we compare two discrete approximations on the same level.
We still kept these rows in the table for completeness and in order to compare the relative tts even if the error is not directly comparable to the errors on the coarser levels.
Independent of the coefficient frequency, we observe a relative tts of about $61\%$ for $m\in \{1,2,3,8\}$.}

\revised{In order to emphasize that using the unphysical scaling results in a wrong solution, we computed the errors in this benchmark using the unphysical scaling \cref{eqn:stencilscalingwithoutcorr}.
In this case the errors on level $\ell = 6$ were $\num{3.95e-3}$ for $m=1$, $\num{1.25e-2}$ for $m=2$, $\num{1.83e-2}$ for $m=3$, and $\num{2.32e-3}$ for $m=8$.}

\revised{For a performance comparison, we performed the same experiment on a compute node of the older SuperMUC Phase 2.
The number V-cycles and errors are the same as on SuperMUC-NG, thus we only present the tts for both matrix-free approaches in \cref{tab:benchmarkphase2}.
On this machine, we observe a relative tts of about $45\%$ independent of the coefficient frequency.
This larger speedup is due to the lower clock rate of the processor which has already been discussed in \Cref{sec:memorytrafficroofline} and illustrated in \cref{fig:performancecomparison}.}

\begin{table}[h]
	\centering\footnotesize
	\caption{\label{tab:benchmarkphysical_and_unphysical} Errors for the linear elastostatics benchmark problem in the discrete $L^2$ norm, convergence rates, \revised{number of V-cycle iterations}, time-to-solution and relative time-to-solution for nodal integration, physical scaling and unphysical scaling recorded for different refinement levels $\ell$ and parameters $m$. \revised{The measurements were conducted on SuperMUC-NG.}}
		\begin{tabular}{*5{c|}r|*3{c|}r|r}
		\toprule
		& & \multicolumn{4}{c|}{nodal integration} & \multicolumn{4}{c}{physical scaling} & \multicolumn{1}{c}{rel.}\\
		$\ell$ & DoFs & error & eoc & \revised{iter} & tts [s] & error & eoc & \revised{iter} & tts [s] & tts \\
		\midrule
		\multicolumn{11}{c}{$m=1$}\\
		\midrule
		1 & \num{1.52e+03} & \num{8.11e-03} & 0.00 & \revised{12} & \revised{   0.07}  & \num{8.07e-03} & 0.00 & \revised{12} & \revised{   0.07} & \revised{0.93} \\ 
		2 & \num{1.21e+04} & \num{2.12e-03} & 1.94 & \revised{18} & \revised{   0.35}  & \num{2.10e-03} & 1.94 & \revised{18} & \revised{   0.29} & \revised{0.81} \\ 
		3 & \num{9.72e+04} & \num{5.43e-04} & 1.97 & \revised{25} & \revised{   2.97}  & \num{5.39e-04} & 1.97 & \revised{25} & \revised{   2.03} & \revised{0.68} \\ 
		4 & \num{7.77e+05} & \num{1.38e-04} & 1.97 & \revised{29} & \revised{  26.50}  & \num{1.37e-04} & 1.97 & \revised{29} & \revised{  17.13} & \revised{0.65} \\ 
		5 & \num{6.22e+06} & \num{3.53e-05} & 1.97 & \revised{32} & \revised{ 230.78}  & \num{3.51e-05} & 1.97 & \revised{32} & \revised{ 143.04} & \revised{0.62} \\ 
		6 & \num{4.97e+07} & \num{4.71e-06} & 2.91 & \revised{32} & \revised{1848.95}  & \num{4.65e-06} & 2.92 & \revised{32} & \revised{1125.28} & \revised{0.61} \\
		\midrule
		\multicolumn{11}{c}{$m=2$}\\
		\midrule
		1 & \num{1.52e+03} & \num{8.26e-03} & 0.00 & \revised{12} & \revised{   0.08} & \num{8.23e-03} & 0.00 & \revised{12} & \revised{   0.07}  & \revised{0.93} \\ 
		2 & \num{1.21e+04} & \num{2.16e-03} & 1.93 & \revised{19} & \revised{   0.37} & \num{2.15e-03} & 1.94 & \revised{19} & \revised{   0.30}  & \revised{0.81} \\ 
		3 & \num{9.72e+04} & \num{5.54e-04} & 1.97 & \revised{25} & \revised{   2.96} & \num{5.50e-04} & 1.97 & \revised{25} & \revised{   2.02}  & \revised{0.68} \\ 
		4 & \num{7.77e+05} & \num{1.41e-04} & 1.97 & \revised{29} & \revised{  26.40} & \num{1.40e-04} & 1.97 & \revised{29} & \revised{  16.88}  & \revised{0.64} \\ 
		5 & \num{6.22e+06} & \num{3.59e-05} & 1.97 & \revised{32} & \revised{ 233.40} & \num{3.57e-05} & 1.97 & \revised{32} & \revised{ 142.98}  & \revised{0.61} \\ 
		6 & \num{4.97e+07} & \num{4.92e-06} & 2.87 & \revised{33} & \revised{1910.68} & \num{4.86e-06} & 2.87 & \revised{33} & \revised{1173.80}  & \revised{0.61} \\
		\midrule
		\multicolumn{11}{c}{$m=3$}\\
		\midrule
		1 & \num{1.52e+03} & \num{8.28e-03} & 0.00 & \revised{12} & \revised{   0.07} & \num{8.30e-03} & 0.00 & \revised{12} & \revised{   0.07}  & \revised{0.95} \\ 
		2 & \num{1.21e+04} & \num{2.16e-03} & 1.94 & \revised{19} & \revised{   0.37} & \num{2.17e-03} & 1.94 & \revised{19} & \revised{   0.30}  & \revised{0.81} \\ 
		3 & \num{9.72e+04} & \num{5.54e-04} & 1.97 & \revised{25} & \revised{   2.95} & \num{5.54e-04} & 1.97 & \revised{25} & \revised{   2.01}  & \revised{0.68} \\ 
		4 & \num{7.77e+05} & \num{1.41e-04} & 1.97 & \revised{30} & \revised{  27.58} & \num{1.41e-04} & 1.97 & \revised{29} & \revised{  16.87}  & \revised{0.61} \\ 
		5 & \num{6.22e+06} & \num{3.59e-05} & 1.97 & \revised{32} & \revised{ 230.73} & \num{3.59e-05} & 1.97 & \revised{32} & \revised{ 145.07}  & \revised{0.63} \\ 
		6 & \num{4.97e+07} & \num{4.99e-06} & 2.85 & \revised{33} & \revised{1902.81} & \num{5.04e-06} & 2.84 & \revised{33} & \revised{1160.50}  & \revised{0.61} \\
		\midrule
		\multicolumn{11}{c}{$m=8$}\\
		\midrule
		1 & \num{1.52e+03} & \num{8.67e-03} & 0.00 & \revised{11} & \revised{   0.07} & \num{9.04e-03} & 0.00 & \revised{11} & \revised{   0.06} & \revised{0.94} \\ 
		2 & \num{1.21e+04} & \num{2.26e-03} & 1.94 & \revised{18} & \revised{   0.36} & \num{2.44e-03} & 1.89 & \revised{18} & \revised{   0.29} & \revised{0.79} \\ 
		3 & \num{9.72e+04} & \num{5.75e-04} & 1.98 & \revised{24} & \revised{   2.83} & \num{6.36e-04} & 1.94 & \revised{24} & \revised{   1.96} & \revised{0.69} \\ 
		4 & \num{7.77e+05} & \num{1.46e-04} & 1.98 & \revised{28} & \revised{  25.69} & \num{1.63e-04} & 1.96 & \revised{28} & \revised{  16.46} & \revised{0.64} \\ 
		5 & \num{6.22e+06} & \num{3.72e-05} & 1.97 & \revised{31} & \revised{ 223.97} & \num{4.14e-05} & 1.98 & \revised{31} & \revised{ 138.66} & \revised{0.62} \\ 
		6 & \num{4.97e+07} & \num{5.54e-06} & 2.75 & \revised{32} & \revised{1844.60} & \num{6.90e-06} & 2.58 & \revised{32} & \revised{1124.61} & \revised{0.61} \\
		\bottomrule
		\noalign{\smallskip}
	\end{tabular}
\end{table}

\begin{table}[h]
\centering\footnotesize
\caption{\label{tab:benchmarkphase2} \revised{Time-to-solution and relative time-to-solution in the linear elastostatics benchmark for nodal integration and physical scaling recorded for different refinement levels $\ell$ and parameters $m$. The measurements were conducted on SuperMUC Phase 2.}}
\begin{tabular}{c|r|r|c}
\toprule
& \multicolumn{1}{c|}{nodal integration} & \multicolumn{1}{c}{physical scaling} & \multicolumn{1}{c}{rel.}\\
$\ell$ & tts [s] & tts [s] & tts\\
\midrule
\multicolumn{4}{c}{$m=1$}\\
\midrule
1 &    0.10 &    0.08 & 0.88\\ 
2 &    0.84 &    0.34 & 0.40\\ 
3 &    4.48 &    2.39 & 0.53\\ 
4 &   42.36 &   20.62 & 0.49\\ 
5 &  381.03 &  176.84 & 0.46\\ 
6 & 3065.13 & 1394.86 & 0.46\\
\midrule
\multicolumn{4}{c}{$m=2$}\\
\midrule
1 &    0.09 &     0.09 & 0.93\\ 
2 &    0.53 &     0.36 & 0.68\\ 
3 &    4.79 &     2.35 & 0.49\\ 
4 &   44.31 &    20.73 & 0.47\\ 
5 &  381.06 &   174.61 & 0.46\\ 
6 & 3170.62 &  1420.91 & 0.45\\
\bottomrule
\noalign{\smallskip}
\end{tabular}\hspace{1em}
\begin{tabular}{c|r|r|c}
\toprule
& \multicolumn{1}{c|}{nodal integration} & \multicolumn{1}{c}{physical scaling} & \multicolumn{1}{c}{rel.}\\
$\ell$ & tts [s] & tts [s] & tts\\
\midrule
\multicolumn{4}{c}{$m=3$}\\
\midrule
1 &    0.09 &    0.08 & 0.91\\ 
2 &    0.58 &    0.35 & 0.61\\ 
3 &    4.76 &    2.43 & 0.51\\ 
4 &   45.51 &   21.02 & 0.46\\ 
5 &  379.64 &  174.09 & 0.46\\ 
6 & 3159.07 & 1420.09 & 0.45\\
\midrule
\multicolumn{4}{c}{$m=8$}\\
\midrule
1 &    0.09 &    0.08 & 0.91\\ 
2 &    0.49 &    0.42 & 0.85\\ 
3 &    4.53 &    2.26 & 0.50\\ 
4 &   42.36 &   20.35 & 0.48\\ 
5 &  369.05 &  168.28 & 0.46\\ 
6 & 3079.40 & 1374.16 & 0.45\\
\bottomrule
\noalign{\smallskip}
\end{tabular}

\end{table}

\subsubsection{Linear elastostatics with external forces}
In this subsection, we present an application of our scaling approach where an external force is applied to an isotropic and heterogeneous material. As before, we consider the stress tensor of Hooke's law and model the problem by \cref{eqn:strongform}, where $\partial \Omega = \Gamma_\text{D} \cup \Gamma_{\text{N}}$ and $\Omega = (0,4)\times(0,2)\times(0,1)$, cf. left of Figure~\ref{fig:linearelasticitydomain}. The Dirichlet boundary is chosen as $\Gamma_\text{D} = \{ (x,y,z) \in \Omega \; | \; z = 0 \}$ and the Neumann boundary as $\Gamma_\text{N} = \partial\Omega \backslash \Gamma_\text{D}$. In this scenario, we ignore volume forces, thus we set $\v{f} = \v{0}$.
The material block is clamped at the bottom, therefore we set $\v{g} = \v{0}$. Further, the following planar force $\v{\hat{t}}$ is applied to the top plane of the foam
\begin{align}
\v{\hat{t}}(x,y,z) = \begin{cases} 
(0,0,-1)^\top & z = 1 \\
(0,0,0)^\top & \text{else}
\end{cases} \si{\giga\pascal}.
\end{align}

\begin{figure}[h]
\centering
\begin{minipage}{0.45\textwidth}
\begin{tikzpicture}[every edge quotes/.append style={auto}]
\tikzstyle{ground}=[fill,pattern=north east lines,draw=none,minimum width=0.75cm,minimum height=0.3cm]
\tikzstyle{ground_rot}=[fill,pattern=north east lines,draw=none,minimum width=0.75cm,minimum height=0.3cm, rotate=45]
\tikzstyle{load}   = [ultra thick,-latex]

\pgfmathsetmacro{\cubex}{4}
\pgfmathsetmacro{\cubey}{1}
\pgfmathsetmacro{\cubez}{2}
\draw [draw=black, every edge/.append style={draw=black, densely dashed, opacity=.5}]
(0,0,0) coordinate (o) -- ++(-\cubex,0,0) coordinate (a) -- ++(0,-\cubey,0) coordinate (b) edge coordinate [pos=1] (g) ++(0,0,-\cubez)  -- ++(\cubex,0,0) coordinate (c) -- cycle
(o) -- ++(0,0,-\cubez) coordinate (d) -- ++(0,-\cubey,0) coordinate (e) edge (g) -- (c) -- cycle
(o) -- (a) -- ++(0,0,-\cubez) coordinate (f) edge (g) -- (d) -- cycle;
\path [every edge/.append style={draw=black, |-|}]
(b) +(0,-12pt) coordinate (b1) edge ["4"'] (b1 -| c)
(b) +(-5pt,0) coordinate (b2) edge ["1"] (b2 |- a)
(c) +(5.5pt,-5.5pt) coordinate (c2) edge ["2"'] ([xshift=5.5pt,yshift=-5.5pt]e)
;

\fill [pattern = north west lines] (-\cubex, -\cubey) rectangle (0, -\cubey-0.3);

\pgfdeclarepatternformonly{west lines}{	\pgfqpoint{-1pt}{-1pt}}{\pgfqpoint{4pt}{4pt}}{\pgfqpoint{3pt}{3pt}}{
	\pgfsetlinewidth{0.4pt}
	\pgfpathmoveto{\pgfqpoint{0pt}{3pt}}
	\pgfpathlineto{\pgfqpoint{3pt}{0pt}}
	\pgfusepath{stroke}
}

\fill [pattern = west lines] (0, -\cubey, 0) -- ++(0, -0.3, 0) -- ++(0, 0, -\cubez) -- ++(0,0.3,0) -- cycle;

\foreach \x in {0,-0.25,...,-\cubex} {
	\foreach \z in {0,-1.0,...,-\cubez} {
		\draw[-latex] (\x,0.5,\z) -- ++(0,-0.5,0);
	}
}
\end{tikzpicture}
\end{minipage}\begin{minipage}{0.5\textwidth}
\includegraphics[width=0.85\textwidth]{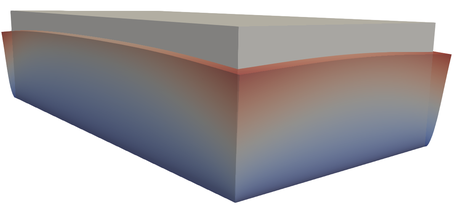}
\begin{tikzpicture}[scale=0.4]
\pgfmathsetlengthmacro\MajorTickLength{
  \pgfkeysvalueof{/pgfplots/major tick length} * 0.5
}
\begin{axis}[
title={\Large $\|u\|$},
xmin=0, xmax=0.02,
ymin=0, ymax=6e-2,
axis on top,
scaled x ticks=false,
scaled y ticks=false,
xtick=\empty,
xticklabels=\empty,
yticklabel pos=right,
y tick label style={
  /pgf/number format/.cd,
            sci zerofill,
            precision=1,
  /tikz/.cd  
},
extra y ticks={
      },
extra y tick style={
    tick style=transparent,     yticklabel pos=right,
    y tick label style={
        /pgf/number format/.cd,
            sci zerofill,
            precision=1,
      /tikz/.cd
    }
},
ylabel style={rotate=-90},
width=1.82cm,
height=5.5cm,
major tick length=\MajorTickLength
]
\addplot graphics [
includegraphics cmd=\pgfimage,
xmin=\pgfkeysvalueof{/pgfplots/xmin}, 
xmax=\pgfkeysvalueof{/pgfplots/xmax}, 
ymin=\pgfkeysvalueof{/pgfplots/ymin}, 
ymax=\pgfkeysvalueof{/pgfplots/ymax}
] {./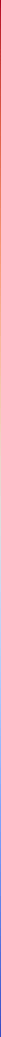};
\end{axis}
\end{tikzpicture}

\end{minipage}
\caption{\label{fig:linearelasticitydomain}Experimental setup and dimensions of the metal foam (left). Initial (gray) and displaced (colored) foam after applying the force on top. The displacement is magnified by a factor of 5. The numerical solution was computed on refinement level $\ell=4$ with standard nodal integration (right).}
\end{figure}
We assume that the material of interest is a metal foam, thus we apply the \emph{Gibson and Ashby model} \cite{zhang2003high, gibson1999cellular} which assumes the following relationship between the elastic modulus of the metal foam $E_f$ and of the matrix $E_m$
\begin{align}
\label{eq:gibsonashby}
\frac{E_f}{E_m} \approx \phi^2 \left(\frac{\rho_f}{\rho_m}\right)^2 + (1-\phi) \frac{\rho_f}{\rho_m},
\end{align}
where $\rho_f$ is the foam's density, $\rho_m$ the matrix density, and $\phi$ the porosity of the foam. Again, we assume the matrix to consist of Aluminum
with a Poisson's ratio of $\nu = 0.34$ and the elastic modulus $E_m = \SI{70}{\giga\pascal}$. Additionally, we assume that the ratio of foam- and matrix-density is given by a radially symmetric function of the form
\begin{align}
\frac{\rho_f}{\rho_m} = \frac{1}{16} x(4-x)z(2-y) + \frac{1}{2},
\end{align}
and $\phi = 1-\frac{\rho_f}{\rho_m}$. The foam's elastic modulus $E_f$ is then obtained by relationship \cref{eq:gibsonashby}.

We discretize the block with 3072 tetrahedra on the coarsest level $\ell = 0$. The finest level considered in this subsection is $L = 5$. Each system of equations is solved using 48 compute cores with the same multigrid solver as in the previous subsection.
\revised{Since no analytical solution is available, we assume that the solution obtained with the reference bilinear form $a_h(\cdot,\cdot)$ is the true solution and compare it to the solutions obtained using the form $\hat{a}_h(\cdot,\cdot)$. We denote the solutions by $\v{u}_h$ and $\hat{\v{u}}_h$, respectively.
Again, we do not consider the stored stencil approach in this comparison, since the problem sizes on the finest level are too large and the matrices could not be stored in memory.}
The error on level $\ell$ is defined by $\| \mathcal{I}_{h_L} \v{v}_{h_\ell} - \v{u}_{h_L} \|$ for \revised{$\v{v} \in \{\v{u}, \hat{\v{u}}\}$} and $\ell \leq L$.
See right of Figure~\ref{fig:linearelasticitydomain} for an illustration of the deformed metal foam computed on level $\ell = 4$.

In \cref{tab:foamphysical_and_unphysical}, we report on the errors, convergence rates, \revised{number of V-cycle iterations}, and run-times for different refinement levels $\ell$. We do not observe optimal quadratic convergence in the discrete $L^2$ norm, even in the nodal-integration case, because of the lower regularity of the problem. The solution obtained by the physical scaling, however, has the same convergence rate but with a relative tts of about \revised{$64\%$}.
\begin{table}[h]
	\centering\footnotesize
	\caption{\label{tab:foamphysical_and_unphysical} Errors of the linear elastostatics with external forces example in the discrete $L^2$ norm, convergence rates, \revised{number of V-cycle iterations}, time-to-solution, and relative time-to-solution for nodal integration, physical scaling recorded for different refinement levels $\ell$.}
	\begin{tabular}{*5{c|}r|*3{c|}r|r}
		\toprule
		& & \multicolumn{4}{c|}{nodal integration} & \multicolumn{4}{c}{physical scaling} & \multicolumn{1}{c}{rel.} \\
		$\ell$ & DoFs & error & eoc & \revised{iter} & tts [s] & error & eoc & \revised{iter} & tts [s] & \multicolumn{1}{c}{tts}  \\
		\midrule
		1 & \num{1.23e+04} & \num{4.46e-04} &  0.00 & \revised{12} &  \revised{ 0.38} & \num{4.44e-04} &  0.00 & \revised{12} &  \revised{ 0.42}  & \revised{1.11} \\
		2 & \num{9.86e+04} & \num{1.70e-04} &  1.39 & \revised{16} &  \revised{ 0.64} & \num{1.69e-04} &  1.39 & \revised{16} &  \revised{ 0.67}  & \revised{1.05} \\
		3 & \num{7.89e+05} & \num{6.45e-05} &  1.40 & \revised{18} &  \revised{ 1.27} & \num{6.42e-05} &  1.40 & \revised{18} &  \revised{ 1.20}  & \revised{0.94} \\
		4 & \num{6.31e+06} & \num{2.31e-05} &  1.48 & \revised{19} &  \revised{ 4.87} & \num{2.30e-05} &  1.48 & \revised{19} &  \revised{ 3.74}  & \revised{0.77} \\
		5 & \num{5.05e+07} & \num{6.59e-06} &  1.81 & \revised{19} &  \revised{30.02} & \num{6.57e-06} &  1.81 & \revised{19} &  \revised{19.22}  & \revised{0.64} \\
		\bottomrule
		\noalign{\smallskip}
	\end{tabular}
\end{table}
\subsection{Generalized incompressible Stokes problem}
\label{sec:genstokes}
In order to show that the new approach is also applicable to indefinite problems, we consider a generalized incompressible Stokes problem with a variable viscosity.
The stress tensor of a generalized Newtonian fluid with viscosity $\mu$ is given by $\stress(\v{u},p) = 2\mu \strain(\v{u})  - p \identity$ and depends not only on the velocity $\v{u}$ but additionally on the pressure $p$.
The problem considered in this section is modeled by the following equations
\begin{equation}
\label{eqn:generalizedstokes}
\begin{alignedat}{2}
-\div \stress &= \v{f} &&\quad \text{in } \Omega,\\
\div \v{u} &= 0 &&\quad \text{in } \Omega,\\
\v{u} &= \v{g} &&\quad \text{on } \Gamma_{\text{D}},\\
\stress \cdot \normal &= \v{\hat{t}} &&\quad \text{on } \Gamma_{\text{N}}.
\end{alignedat}
\end{equation}
on a domain $\Omega \subset \reals^3$ with a Dirichlet boundary $\Gamma_{\text{D}}$ and Neumann boundary $\Gamma_{\text{N}}$.
For the well posedness of the problem, the \revised{finite element} spaces need to meet a uniform inf-sup condition which is not the case for an equal-order $P1$ discretization. Therefore, we add a level dependent residual based stabilization term $c_\ell$ \cite{brezzi1984stabilization} to the mass conservation equation, i.e., $c_\ell(p,q) = -\frac{h_\ell^2}{12} \scpd[\Omega]{\grad{p}}{\grad{q}}$. If $\Gamma_{\text{N}} = \emptyset$ then the pressure is not unique up to a constant and we enforce uniqueness by demanding that the mean value of the pressure is zero.
\revised{This equal order discretization is inconsistent and does not conserve the mass locally, however, the discrete solutions still convergence with the optimal order.
There are possibilities to obtain local mass conservation by a post process which is discussed in \cite{waluga2006masscorrection}.}
\subsubsection{Stationary geophysics example}
To demonstrate that the presented method is also suitable for solving geophysical problems, we present an example inspired by convection in the Earth's mantle.
The domain is chosen as $\Omega = (0,1)^3$ with $\Gamma_{\text{D}} = \partial\Omega$ and $\Gamma_{\text{N}} = \emptyset$.
In this scenario, the viscosity and the volume forces depend on the temperature. Therefore, we construct a temperature field $\vartheta$, resembling a temperature plume in the Earth's mantle given by following formula 
\begin{align}
\vartheta(x,y,z) = \frac{89\, \mathrm{e}^{ - 30\, {\left(z + \left(\frac{3\, r}{2} + \frac{3}{4}\right)\, \left(r - \frac{1}{2}\right) - \frac{3}{10}\right)}^2 - 10\, r^2}}{100} + \frac{49\, \mathrm{e}^{- 100\, r^2}}{50\, \left(\mathrm{e}^{17\, z - \frac{1819}{200}} + 1\right)},
\end{align}
with $r(x,y,z) = \sqrt{\frac{13\, {\left(x - \frac{1}{2}\right)}^2}{10} + \frac{27\, {\left(y - \frac{1}{2}\right)}^2}{10}}$. Note that the temperature field is not radially symmetric and therefore no problem reduction due to symmetry is possible. The viscosity $\mu$ of the fluid is then given by an exponential law \revised{with a jump across a horizontal plane}, i.e.,
\revised{
\begin{align}
\mu(x,y,z) = \mathrm{e}^{-\vartheta(x,y,z)} \cdot \begin{cases}
10^{-2} & z > \frac{3}{4} \\
1 & \text{else}
\end{cases}.
\end{align}
See left of \cref{fig:plume} for an illustration.
In cases like this, where the location of a jump is known a priori, it is possible to resolve the jump via the macro mesh, since the standard on-the-fly integration is performed across these interfaces.
If the locations of jumps are not known beforehand, it is still possible to locally mark elements where the standard on-the-fly integration should be performed.
This solution will of course reduce the performance because of the additional branches in the code and possible load imbalances between processes.}
\revised{Stokes flow problems with larger viscosity jumps or a highly heterogeneous viscosity require more sophisticated preconditioners than the geometric multigrid solver used here.
See, e.g.,~\cite{rudi2017weighted} how to construct a robust iterative solver in this case}.
Additionally, we assume a gravitational source term $\v{f} = \vartheta \cdot (0, 0, 10)^\top$ arising from a Boussinesq approximation \cite{waluga2006masscorrection,ricard2007physics}. \cref{fig:plume} on the right shows the velocity streamlines of the numerical solution using nodal integration computed on a mesh with $50\,331\,648$ tetrahedra.
\begin{figure}[h]
\centering
\begin{minipage}{\linewidth}
\includegraphics[keepaspectratio,height=3.7cm]{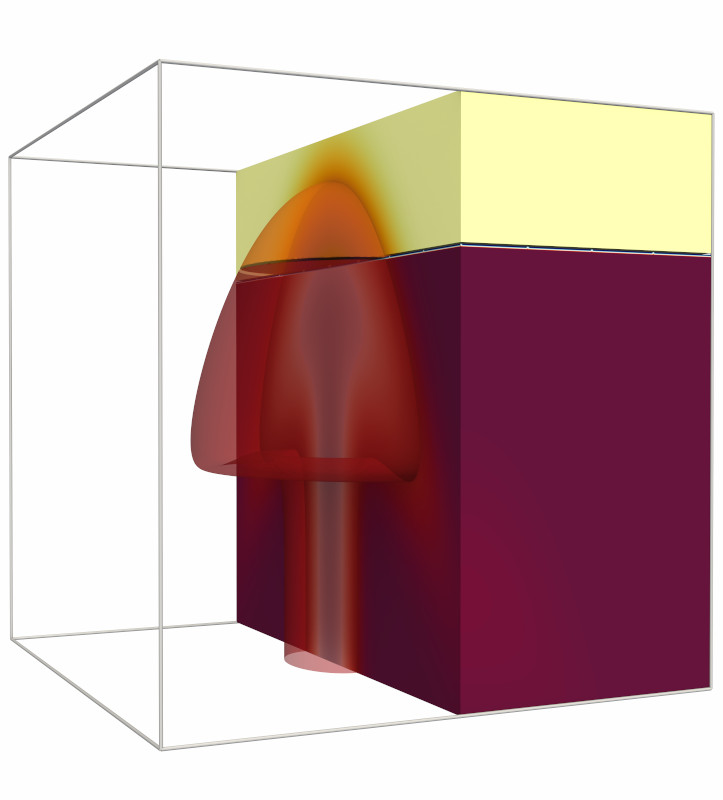}\begin{tikzpicture}[scale=0.4]
\pgfmathsetlengthmacro\MajorTickLength{
  \pgfkeysvalueof{/pgfplots/major tick length} * 0.5
}
\begin{axis}[
title={\Large $\mu$},
xmin=0, xmax=0.02,
ymin=4.7e-3, ymax=1e-2,
axis on top,
scaled x ticks=false,
scaled y ticks=false,
xtick=\empty,
xticklabels=\empty,
yticklabel pos=right,
y tick label style={
  /pgf/number format/.cd,
            sci zerofill,
            precision=1,
  /tikz/.cd  
},
extra y ticks={
   \pgfkeysvalueof{/pgfplots/ymin}
   },
extra y tick style={
    tick style=transparent,     yticklabel pos=right,
    y tick label style={
        /pgf/number format/.cd,
            sci zerofill,
            precision=1,
      /tikz/.cd
    }
},
ylabel style={rotate=-90},
width=1.82cm,
height=4.5cm,
major tick length=\MajorTickLength
]
\addplot graphics [
includegraphics cmd=\pgfimage,
xmin=\pgfkeysvalueof{/pgfplots/xmin}, 
xmax=\pgfkeysvalueof{/pgfplots/xmax}, 
ymin=\pgfkeysvalueof{/pgfplots/ymin}, 
ymax=\pgfkeysvalueof{/pgfplots/ymax}
] {./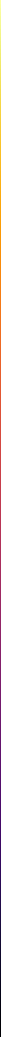};
\end{axis}
\begin{axis}[
yshift=-4.5cm,
title={\Large $\mu$},
xmin=0, xmax=0.02,
ymin=4.7e-3, ymax=1,
axis on top,
scaled x ticks=false,
scaled y ticks=false,
xtick=\empty,
xticklabels=\empty,
yticklabel pos=right,
y tick label style={
  /pgf/number format/.cd,
            sci zerofill,
            precision=1,
  /tikz/.cd  
},
extra y ticks={
   \pgfkeysvalueof{/pgfplots/ymin}
   },
extra y tick style={
    tick style=transparent,     yticklabel pos=right,
    y tick label style={
        /pgf/number format/.cd,
            sci zerofill,
            precision=1,
      /tikz/.cd
    }
},
ylabel style={rotate=-90},
width=1.82cm,
height=4.5cm,
major tick length=\MajorTickLength
]
\addplot graphics [
includegraphics cmd=\pgfimage,
xmin=\pgfkeysvalueof{/pgfplots/xmin}, 
xmax=\pgfkeysvalueof{/pgfplots/xmax}, 
ymin=\pgfkeysvalueof{/pgfplots/ymin}, 
ymax=\pgfkeysvalueof{/pgfplots/ymax}
] {./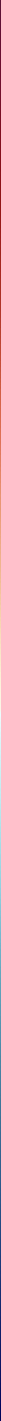};
\end{axis}
\end{tikzpicture}
\includegraphics[keepaspectratio,height=3.7cm]{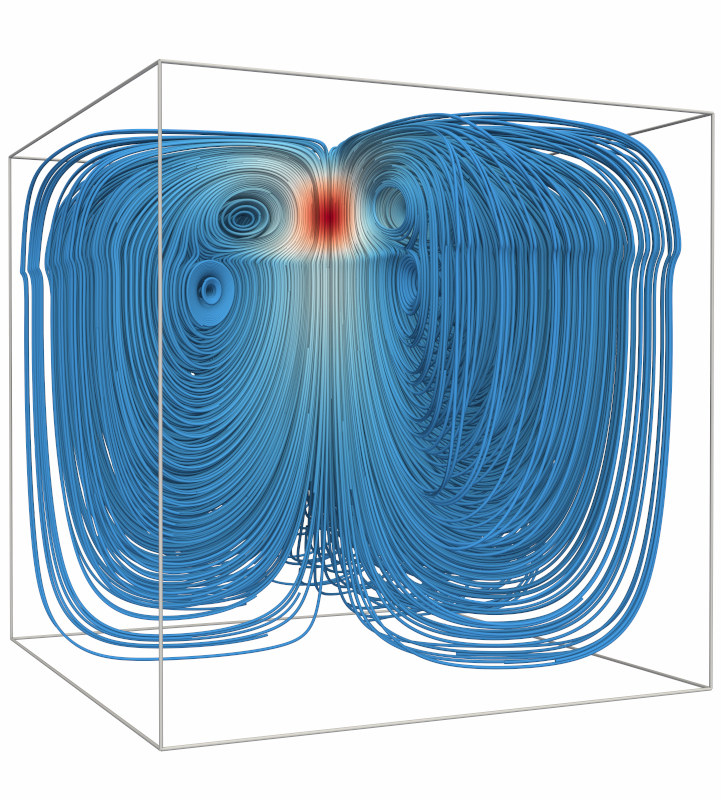}
\begin{tikzpicture}[scale=0.4]
\pgfmathsetlengthmacro\MajorTickLength{
  \pgfkeysvalueof{/pgfplots/major tick length} * 0.5
}
\begin{axis}[
title={\Large $\|u\|$},
xmin=0, xmax=0.02,
ymin=0, ymax=3.5e-1,
axis on top,
scaled x ticks=false,
scaled y ticks=false,
xtick=\empty,
xticklabels=\empty,
yticklabel pos=right,
y tick label style={
  /pgf/number format/.cd,
            sci zerofill,
            precision=1,
  /tikz/.cd  
},
extra y ticks={
      \pgfkeysvalueof{/pgfplots/ymax}
},
extra y tick style={
    tick style=transparent,     yticklabel pos=right,
    y tick label style={
        /pgf/number format/.cd,
            sci zerofill,
            precision=1,
      /tikz/.cd
    }
},
ylabel style={rotate=-90},
width=1.82cm,
height=5.5cm,
major tick length=\MajorTickLength
]
\addplot graphics [
includegraphics cmd=\pgfimage,
xmin=\pgfkeysvalueof{/pgfplots/xmin}, 
xmax=\pgfkeysvalueof{/pgfplots/xmax}, 
ymin=\pgfkeysvalueof{/pgfplots/ymin}, 
ymax=\pgfkeysvalueof{/pgfplots/ymax}
] {./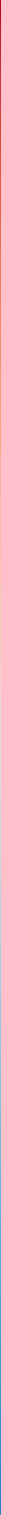};
\end{axis}
\end{tikzpicture}
\includegraphics[height=3.5cm]{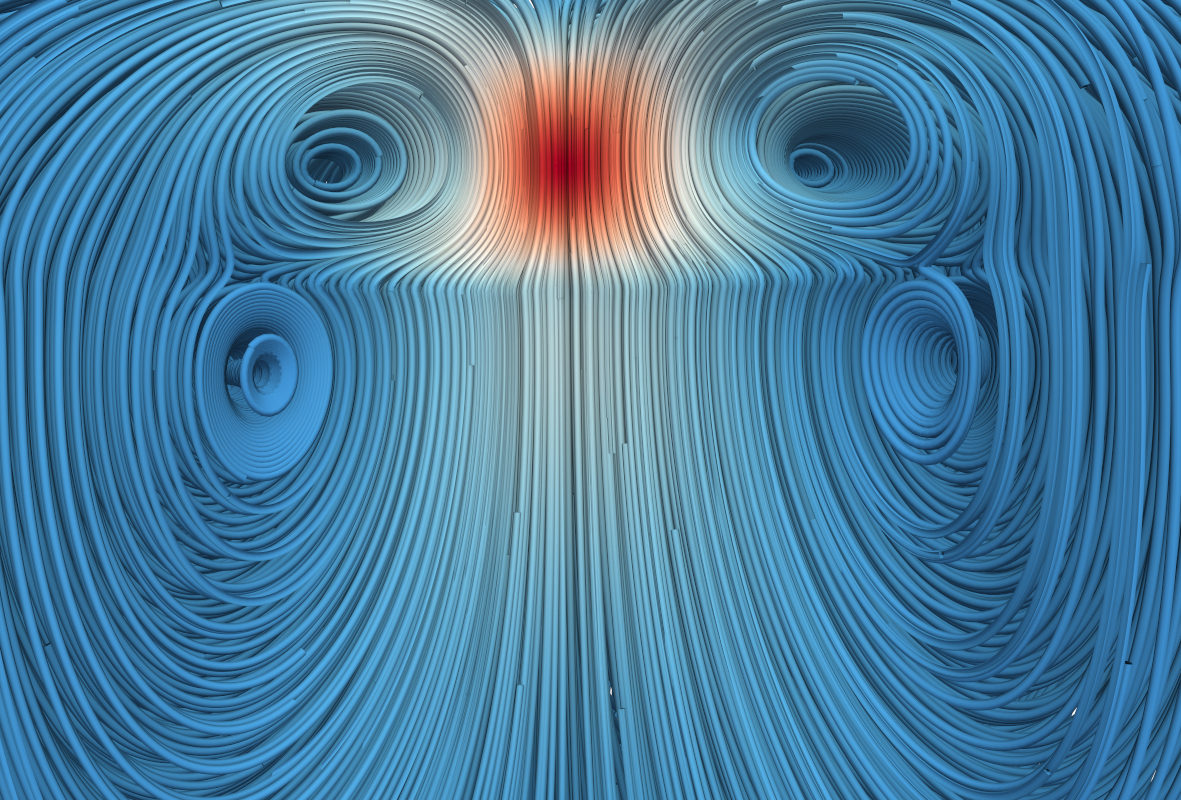}
\end{minipage}
\caption{\label{fig:plume} \revised{Viscosity $\mu$ depending on the given plume temperature field $\vartheta$ with isosurface $\mu = 0.85$ in the lower part and $\mu = 0.0085$ in the upper part (left). Velocity streamlines of the numerical solution computed on a mesh with $50\,331\,648$ tetrahedra using nodal integration (middle). Zoom on the velocity streamlines at the center (right).}}
\end{figure}
In the following scenario, the coarsest level $\ell = 0$ is discretized by $786\,432$ tetrahedra and each system is solved using $12\,288$ compute cores \revised{on SuperMUC-NG}.
\revised{The finest level $\ell=6$ involved solving a system with about \num{1.03e+11} DoFs.
In order to solve the systems, we employ the inexact Uzawa solver presented in \cite{drzisga2018analysis} with variable $V(3,3)$ cycles where two smoothing steps are added to each coarser refinement level which enforces convergence of the method.
As a smoother, we again employ the hybrid Gauss-Seidel method but with a relaxation parameter of $0.3$ in the pressure part.
On the coarsest level, we employ the diagonally preconditioned MINRES method since the problem is not positive definite but symmetric.}
\revised{The solutions obtained with both approaches do not show any visual differences on all levels.
In \cref{tab:stokesnodal}, we report on the number of inexact Uzawa iterations and tts for different refinement levels $\ell$. The relative tts of the physical scaling is based on the tts of the nodal integration.
}
The worse relative tts in comparison to the linear elasticity examples is due to the fact that the cost of the divergence matrices and the stabilization matrix need to be taken into account.
Since the stencils of these matrices are constant on each macro primitive and do not depend on a coefficient, we employ specialized kernels for them in all of the approaches.
\revised{This part of the cost remains unchanged for both approaches.
Therefore, assuming an optimal relative tts of $61$\% for the velocity block results in a theoretical optimal relative tts of only at about $78\%$ for the total Stokes operator.
This value is close to the value observed for $\ell = 6$ in \cref{tab:stokesnodal}.
We observe that the number of V-cycle iterations required to obtain the relative residual tolerance is larger for multigrid hierarchies with fewer levels compared to a larger number of levels.
Furthermore, the tts for the levels $\ell = 1$ to $\ell = 4$ are very close to each other.
Since the number of MPI ranks is fixed and the coarse grid solver does not scale optimally, most of the time is spent on the coarse grid if the number of multigrid levels is small.
This cost of the coarse grid solver decreases relatively if the number of multigrid levels increases and more time is spent for smoothing on the finer levels.}
\begin{table}[h]
	\centering\footnotesize
	\caption{\label{tab:stokesnodal}\revised{Velocity and pressure errors of the stationary geophysics example in the discrete $L^2$ norm, convergence rates, number of inexact Uzawa iterations, time-to-solution, and relative time-to-solution for nodal integration and physical scaling recorded for different refinement levels~$\ell$.}}
	\begin{tabular}{c|l|r|r|r|r|r}
       \toprule
       & & \multicolumn{2}{c|}{nodal integration} & \multicolumn{2}{c}{physical scaling} & \multicolumn{1}{c}{rel.}\\
       $\ell$ & \multicolumn{1}{c|}{DoFs} & iter &  tts [s] & iter & tts [s] & \multicolumn{1}{c}{tts} \\
       \midrule
       1 & \num{4.19e+06} & \revised{\num{12}} & \revised{ 36.58} & \revised{\num{12}} & \revised{ 39.30} & \revised{1.07} \\ 
       2 & \num{3.35e+07} & \revised{\num{11}} & \revised{ 32.58} & \revised{\num{11}} & \revised{ 34.98} & \revised{1.07} \\ 
       3 & \num{2.68e+08} & \revised{\num{10}} & \revised{ 31.69} & \revised{\num{10}} & \revised{ 32.22} & \revised{1.02} \\ 
       4 & \num{2.15e+09} & \revised{\num{9}}  & \revised{ 32.60} & \revised{\num{9}}  & \revised{ 32.38} & \revised{0.99} \\ 
       5 & \num{1.72e+10} & \revised{\num{9}}  & \revised{ 62.84} & \revised{\num{9}}  & \revised{ 54.91} & \revised{0.87} \\ 
       6 & \num{1.03e+11} & \revised{\num{8}}  & \revised{262.74} & \revised{\num{8}}  & \revised{197.58} & \revised{0.75} \\ 
       \bottomrule
       \noalign{\smallskip}
   \end{tabular}
\end{table}
\subsubsection{Non-linear generalized Stokes problem}
In this section, we consider the scenario of a non-linear incompressible Stokes problem where the fluid is assumed to be of generalized Newtonian type, modeled by a shear-thinning Carreau model
\begin{align}\label{eq:carreau}
\mu(\v{u}) = \eta_\infty + \left(\eta_0 - \eta_\infty\right) \left(1 + \kappa  |\strain(\v{u})|^2 \right)^{r}.
\end{align}
The considered parameters in dimensionless form are specified in left of \cref{fig:carreau_and_domain}. These parameters stem from experimental results, cf.~\cite[Chapter~II]{galdi2008hemodynamical}.
\begin{figure}[h]
\centering
\begin{minipage}[c]{0.35\textwidth}
\begin{tabular}{c | r}
\toprule
Parameter & \multicolumn{1}{c}{Value} \\
\midrule
$\eta_0$ & $140.764$ \\
$\eta_\infty$ & $1.0$ \\
$\kappa$ & $212.2$ \\
$r$ & $-0.325$\\
\bottomrule 
\end{tabular}
\end{minipage}
\begin{minipage}[c]{0.6\textwidth}
\includegraphics[width=\textwidth]{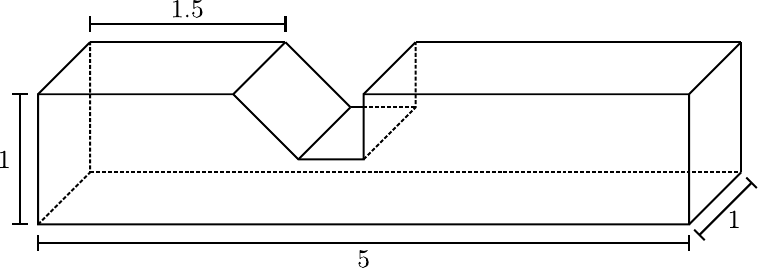}
\end{minipage}
\caption{\label{fig:carreau_and_domain}Dimensionless parameters for the Carreau viscosity model (left). Experimental setup and dimensions of channel (right).}
\end{figure}
The computational domain $\Omega$ is depicted in the right of \cref{fig:carreau_and_domain}, discretized by $14\,208$ tetrahedra on the coarsest level $\ell = 0$. The boundary $\partial \Omega$ is composed into Dirichlet and Neumann parts, i.e., $\partial \Omega = \Gamma_\text{D} \cup \Gamma_{\text{N}}$ with $\Gamma_\text{D} = \{ (x,y,z) \in \partial \Omega \; | \; x < 5 \}$ and $\Gamma_\text{N} = \partial\Omega \backslash \Gamma_\text{D}$.
The volume force term $\v{f}$, the external forces $\v{\hat{t}}$, and the Dirichlet boundary term $\v{g}$ are set to
\begin{align}
\v{f} = (0,0,100)^\top, \; \v{\hat{t}} = (0,0,0)^\top, \; \text{and} \;  \v{g} = 16 \, y \, (1 - y) \, z \, (1 - z)\cdot(1, 0, 1)^\top.
\end{align}
We solve this non-linear system by applying an inexact fixed-point iteration similar to the non-linear solver described in \cite{dembo1982inexact}, where the underlying linear systems are only solved approximately to prevent over solving. The pseudo-code of our approach is presented in \cref{alg:fixedpoint_mg}. The same inexact Uzawa multigrid solver described in the previous subsection is used for the computations in this subsection.
\revised{The problem is solved using a total of $111$ MPI ranks with 3 compute nodes on SuperMUC-NG.}
Following the standard notation, the discretized saddle-point problem in a single fixed-point iteration reads
\begin{align}
\label{eq:fixpointsystem}
\begin{pmatrix}
\sfA\left({\mu^{(n)}}\right) & \sfB^\top\\
\sfB & \sfC
\end{pmatrix} \begin{pmatrix}
\sfu^{(n+1)}\\
\sfp^{(n+1)}
\end{pmatrix} = \begin{pmatrix}
\sff\\
\sfzero
\end{pmatrix}.
\end{align}
\begin{algorithm}\caption{Fixed-point iterations coupled with a multigrid solver \label{alg:fixedpoint_mg}}
\begin{algorithmic}
\STATE Set $\sfu^{(0)} = \sfzero$, $\sfp^{(0)} = \sfzero$, $n = 0$
\STATE Set $\mu^{(0)} = \mu(\sfu^{(0)})$  by employing the local approximation from \Cref{sec:nodalcoeff}
\REPEAT
\STATE Solve system \cref{eq:fixpointsystem} for $\sfu^{(n+1)}$ and $\sfp^{(n+1)}$ by applying an inexact Uzawa V(3,3)-cycle
\STATE Set $\mu^{(n+1)} = \mu\left(\sfu^{(n+1)}\right)$ by employing the local approximation from \Cref{sec:nodalcoeff}
\STATE Set $n = n+1$
\UNTIL{$\max_{i}\{\| \sfu^{(n)}_i - \sfu^{(n-1)}_i \|_2\} < 10^{-3} \cdot \max_{i}\{\|\sfu_i^{(n)}\|_2\}$}
\STATE Solve system \cref{eq:fixpointsystem} for $\sfu^{(n+1)}$ and $\sfp^{(n+1)}$ by applying \revised{final} Uzawa V(3,3)-cycles until a relative residual of $10^{-3}$ is obtained
\STATE Set $\mu^{(n+1)} = \mu\left(\sfu^{(n+1)}\right)$  by employing the local approximation from \Cref{sec:nodalcoeff}
\RETURN $\sfu^{(n+1)}$, $\sfp^{(n+1)}$, and $\mu^{(n+1)}$
\end{algorithmic}
\end{algorithm}
In \cref{fig:stokes_profile}, we plot the final viscosity profile and y-component of the velocity along the line $\theta = [0,5] \times \{0.5\} \times \{0.42\}$
for \revised{both} approaches computed on $\ell = 4$. We see that the solutions of the nodal integration and physical scaling approaches coincide.
\revised{For the sake of completeness, we also present the unphysical scaling results which yield different curves in both figures.}
\begin{figure}
\centering
\begin{tikzpicture}[font=\scriptsize]\begin{axis}[
xlabel={$x$},
xmajorgrids,
ylabel={$\mu$},
ymajorgrids,
mark repeat=50,
height=4cm,
width=\textwidth,
]
\addplot[color5,mark=*,thick] table [x="arc_length", y="coeff_scaled_inexact", col sep=comma] {data/csv/stokes_fixedpoint_upwardinflow.csv};
\addplot[color2,mark=square*,thick] table [x="arc_length", y="coeff_fe", col sep=comma] {data/csv/stokes_fixedpoint_upwardinflow.csv};
\addplot[color1,mark=diamond*,thick] table [x="arc_length", y="coeff_scaled_exact", col sep=comma] {data/csv/stokes_fixedpoint_upwardinflow.csv};
\end{axis}
\end{tikzpicture}
\begin{tikzpicture}[font=\scriptsize]\centering
\begin{axis}[
xlabel={$x$},
xmajorgrids,
ylabel={$\|u\|$},
ymajorgrids,
mark repeat=50,
height=4cm,
width=\textwidth,
]
\addplot[color5,mark=*,thick] table [x="arc_length", y="mag_u_unphysical", col sep=comma] {data/csv/stokes_fixedpoint_upwardinflow.csv};
\addlegendentry{unphysical scaling};
\addplot[color2,mark=square*,thick] table [x="arc_length", y="mag_u_fe", col sep=comma] {data/csv/stokes_fixedpoint_upwardinflow.csv};
\addlegendentry{nodal integration};
\addplot[color1,mark=diamond*,thick] table [x="arc_length", y="mag_u_physical", col sep=comma] {data/csv/stokes_fixedpoint_upwardinflow.csv};
\addlegendentry{physical scaling};
\end{axis}
\end{tikzpicture}
\caption{\label{fig:stokes_profile}Viscosity profile (top) and profile of the velocity magnitude (bottom) plotted over $\theta$ for different assembly approaches.}
\end{figure}
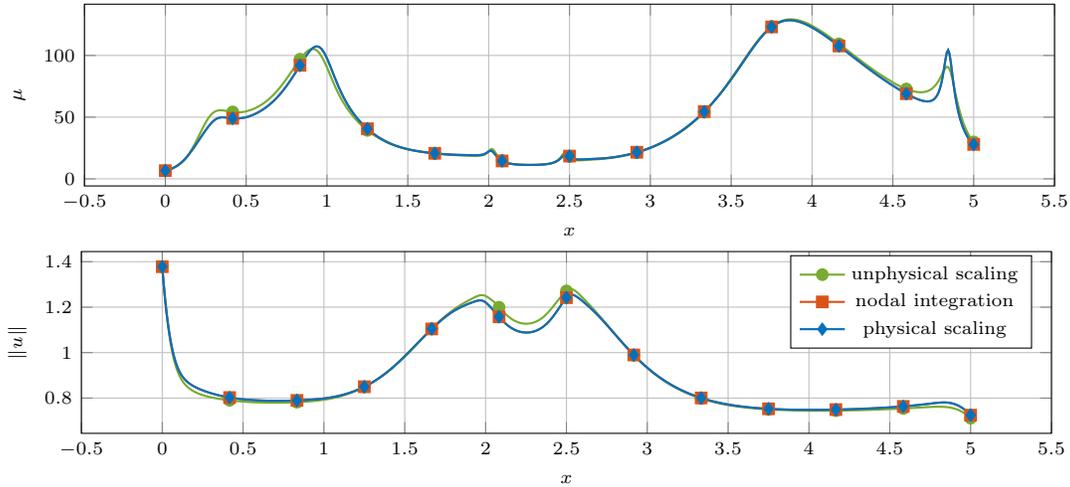

\revised{In \cref{tab:ttsnonlinearstokes}, we report on the number of fixed point iterations, the number of final iterations, and the absolute and relative time-to-solutions for the nodal integration and physical scaling.
} \revised{We observe a relative tts of about $87\%$ on the finest level $\ell = 6$.}
\begin{table}	\centering\footnotesize
	\caption{\label{tab:ttsnonlinearstokes} Relative time-to-solution comparison of the nodal integration and physical scaling approach for the non-linear generalized Stokes problem.}
	\begin{tabular}{c|c|c|c|r|c|c|r|r}
		\toprule
		& & \multicolumn{3}{c|}{nodal integration} & \multicolumn{3}{c}{physical scaling} & \multicolumn{1}{c}{rel.} \\
		$\ell$ & DoFs & \revised{fixed-point iter} & \revised{final iter} & \multicolumn{1}{c|}{tts [s]} & \revised{fixed-point iter} & \revised{final iter} & \multicolumn{1}{c|}{tts [s]} & \multicolumn{1}{c}{tts}  \\
		\midrule
		3 & \num{4.69E+06} & \revised{26} & \revised{14} & \revised{  42.50} & \revised{26} & \revised{14} & \revised{  41.80} & \revised{0.98}\\
		4 & \num{3.82E+07} & \revised{29} & \revised{7}  & \revised{ 127.98} & \revised{29} & \revised{7}  & \revised{ 121.87} & \revised{0.95}\\ 
		5 & \num{3.08E+08} & \revised{25} & \revised{8}  & \revised{ 571.84} & \revised{29} & \revised{8}  & \revised{ 578.57} & \revised{1.01}\\ 
		6 & \num{2.47E+09} & \revised{24} & \revised{7}  & \revised{3349.39} & \revised{25} & \revised{6}  & \revised{2925.09} & \revised{0.87}\\
		\bottomrule
		\noalign{\smallskip}
	\end{tabular}
\end{table}
Since the coefficient $\mu$ changes after each multigrid V-cycle, the caching of face stencils as it was done in the previous sections is not possible.
This has a large impact on the run-time for lower levels in the hierarchy, since the cost may be dominated by the face primitives.
Only asymptotically, for fine levels, the cost of the face primitives is small compared to the cost of the element primitives.
In this numerical experiment, the solver performance is worse than in the previous examples because of the inherent difficulty of the non-linear problem and the expensive on-the-fly nodal integration of the bilinear form on the \revised{macro faces}.

\section{Conclusion}
\label{sec:conclusion}
\revised{We have presented a new method to improve the performance of matrix-free operator applications for vector-valued second-order elliptic PDEs.	Although we restricted ourselves to linear finite elements on hierarchical hybrid grids, the idea of exploiting structure and symmetry in the mesh
 applies for higher order as well, which we described in a remark.
The method is based on scaling reference stencils originating from a constant coefficient discretization by variable coefficients.
We showed in theory that a correction term is required in cases where the coefficient is not constant.
Furthermore, we presented how to pre-compute these correction terms in case of HHGs and how to re-scale them using the coefficient.
This new approach was aimed to reduce memory traffic and to reduce the number of required floating point operations.
In order to show this, we first derived theoretical models about the required number of floating-point operations and the memory traffic.
We validated these estimates using benchmarks and numerical experiments on the supercomputer SuperMUC-NG.
The results have shown that specially designed matrix-free methods like our method may be beneficial compared to matrix-based methods not only for higher-order discretizations but also for low-order discretizations where the arithmetic intensity is lower.
The numerical benchmarks and experiments involving linear elasticity and Stokes flow also showed that our method is faster and comparably accurate as standard methods.
Although we only applied this method using a geometric multigrid solver it can also be used with more scalable coarse grid solvers like algebraic multigrid or combined with preconditioners designed for large viscosity jumps or heterogeneous viscosities in Stokes flow problems.
This makes our method attractive for highly resolved simulations on current and future machines where the available memory is limited compared to the compute power.
}

\section*{Acknowledgments}
This work was partly supported by the German Research Foundation through
the Priority Programme 1648 ``Software for Exascale Computing'' (SPPEXA) and by grant WO671/11-1.
The authors gratefully acknowledge the Gauss Centre for Supercomputing e.V. (GCS, \href{www.gauss-centre.eu}{www.gauss-centre.eu}) for funding this project by providing computing time on the GCS supercomputer SuperMUC at Leibniz Supercomputing Centre (LRZ, \href{www.lrz.de}{www.lrz.de}). 

\bibliographystyle{siam}
\bibliography{references}

\begin{thebibliography}{10}

\bibitem{Afzal2015}
{\sc A.~Afzal}, {\em The cost of computation: Metrics and models for modern
  multicore-based systems in scientific computing}, Master's thesis, Department
  Informatik, Friedrich Alexander Universit{\"a}t Erlangen-N{\"u}rnberg, 2015.

\bibitem{alappat2020understanding}
{\sc C.~L. Alappat, J.~Hofmann, G.~Hager, H.~Fehske, A.~R. Bishop, and
  G.~Wellein}, {\em Understanding hpc benchmark performance on intel broadwell
  and cascade lake processors}, arXiv preprint arXiv:2002.03344,  (2020).

\bibitem{Arbenz:2008:IJMME}
{\sc P.~Arbenz, G.~H. van Lenthe, U.~Mennel, R.~M{\"u}ller, and M.~Sala}, {\em
  A scalable multi-level preconditioner for matrix-free $\mu$-finite element
  analysis of human bone structures}, International Journal for Numerical
  Methods in Engineering, 73 (2008), pp.~927--947.

\bibitem{scalarstencilscaling}
{\sc S.~Bauer, D.~Drzisga, M.~Mohr, U.~R{\"u}de, C.~Waluga, and B.~Wohlmuth},
  {\em A stencil scaling approach for accelerating matrix-free finite element
  implementations}, SIAM Journal on Scientific Computing, 40 (2018),
  pp.~C748--C778.

\bibitem{bergen2005hierarchical}
{\sc B.~Bergen}, {\em Hierarchical Hybrid Grids: {D}ata Structures and Core
  Algorithms for Efficient Finite Element Simulations on Supercomputers}, SCS
  Publishing House, Erlangen, 2005.

\bibitem{bergen2004hierarchical}
{\sc B.~Bergen and F.~H{\"u}lsemann}, {\em Hierarchical hybrid grids: data
  structures and core algorithms for multigrid}, Numer.~Lin.~Alg.~Appl., 11
  (2004), pp.~279--291.

\bibitem{bergen2007hierarchical}
{\sc B.~Bergen, G.~Wellein, F.~H{\"u}lsemann, and U.~R{\"u}de}, {\em
  Hierarchical hybrid grids: {A}chieving {TERAFLOP} performance on large scale
  finite element simulations}, International Journal of Parallel, Emergent and
  Distributed Systems, 22 (2007), pp.~311--329.

\bibitem{bey1995tetrahedral}
{\sc J.~Bey}, {\em Tetrahedral grid refinement}, Computing, 55 (1995),
  pp.~355--378.

\bibitem{Bielak:2005:CMES}
{\sc J.~Bielak, O.~Ghattas, and E.-J. Kim}, {\em Parallel {O}ctree-{B}ased
  {F}inite {E}lement {M}ethod for {L}arge-{S}cale {E}arthquake {G}round
  {M}otion {S}imulation}, Computer Modeling in Engineering \& Sciences, 10
  (2005), pp.~99--112.

\bibitem{brezzi1984stabilization}
{\sc F.~Brezzi and J.~Pitk{\"a}ranta}, {\em On the stabilization of finite
  element approximations of the {S}tokes equations}, in Efficient solutions of
  elliptic systems, Springer, 1984, pp.~11--19.

\bibitem{Brown:2010:JSC}
{\sc J.~Brown}, {\em Efficient {N}onlinear {S}olvers for {N}odal {H}igh-{O}rder
  {F}inite {E}lements in {3D}}, J.~Scientific Computing, 45 (2010), pp.~48--63.

\bibitem{Carey:1986:CANM}
{\sc G.~F. Carey and B.-N. Jiang}, {\em Element-by-element linear and nonlinear
  solution schemes}, Communications in Applied Numerical Methods, 2 (1986),
  pp.~145--153.

\bibitem{dembo1982inexact}
{\sc R.~S. Dembo, S.~C. Eisenstat, and T.~Steihaug}, {\em Inexact {N}ewton
  methods}, SIAM Journal on Numerical analysis, 19 (1982), pp.~400--408.

\bibitem{drzisga2018analysis}
{\sc D.~Drzisga, L.~John, U.~R{\"{u}}de, B.~Wohlmuth, and W.~Zulehner}, {\em On
  the analysis of block smoothers for saddle point problems}, SIAM Journal on
  Matrix Analysis and Applications, 39 (2018), pp.~932--960.

\bibitem{Flaig:2012:LSSC}
{\sc C.~Flaig and P.~Arbenz}, {\em A {H}ighly {S}calable {M}atrix-{F}ree
  {M}ultigrid {S}olver for $\mu${FE} {A}nalysis {B}ased on a {P}ointer-{L}ess
  {O}ctree}, in Large-Scale Scientific Computing: 8th International Conference,
  LSSC 2011, Sozopol, Bulgaria, June 6-10, 2011, Revised Selected Papers,
  I.~Lirkov, S.~Margenov, and J.~Wa{\'{s}}niewski, eds., Springer Berlin
  Heidelberg, 2012, pp.~498--506.

\bibitem{galdi2008hemodynamical}
{\sc G.~P. Galdi, R.~Rannacher, A.~M. Robertson, and S.~Turek}, {\em
  Hemodynamical flows}, Delhi Book Store,  (2008).

\bibitem{gibson1999cellular}
{\sc L.~J. Gibson and M.~F. Ashby}, {\em Cellular solids: structure and
  properties}, Cambridge university press, 1999.

\bibitem{gmeiner2012highly}
{\sc B.~Gmeiner, T.~Gradl, H.~K{\"o}stler, and U.~R{\"u}de}, {\em Highly
  parallel geometric multigrid algorithm for hierarchical hybrid grids}, in NIC
  Symposium, vol.~45, 2012, pp.~323--330.

\bibitem{hager2010introduction}
{\sc G.~Hager and G.~Wellein}, {\em Introduction to high performance computing
  for scientists and engineers}, CRC Press, 2010.

\bibitem{Ilic:2013:CAL}
{\sc A.~Ilic, F.~Pratas, and L.~Sousa}, {\em Cache-aware {R}oofline model:
  {U}pgrading the loft}, IEEE Computer Architecture Letters, 13 (2013),
  pp.~21--24.

\bibitem{intel-advisor}
{\sc {Intel Corp.}}, {\em {I}ntel {A}dvisor}.
\newblock \url{https://software.intel.com/en-us/intel-advisor-xe}, 2019.

\bibitem{intel-vtune}
\leavevmode\vrule height 2pt depth -1.6pt width 23pt, {\em {I}ntel {VTune}
  {P}rofiler}.
\newblock \url{https://software.intel.com/en-us/vtune}, 2019.

\bibitem{Kohl2018HyTeGfinite}
{\sc N.~Kohl, D.~Th{\"o}nnes, D.~Drzisga, D.~Bartuschat, and U.~R{\"u}de}, {\em
  The {HyTeG}~finite-element software framework for scalable multigrid
  solvers}, International Journal of Parallel, Emergent and Distributed
  Systems,  (2018), pp.~1--20.

\bibitem{Kronbichler:2012:CAF}
{\sc M.~Kronbichler and K.~Kormann}, {\em A generic interface for parallel
  cell-based finite element operator application}, Computers and Fluids, 63
  (2012), pp.~135--147.

\bibitem{Ljungkvist:2017:HPC17}
{\sc K.~Ljungkvist}, {\em Matrix-free {F}inite-element {C}omputations on
  {G}raphics {P}rocessors with {A}daptively {R}efined {U}nstructured {M}eshes},
  in Proceedings of the 25th High Performance Computing Symposium, HPC '17,
  Society for Computer Simulation International, 2017, pp.~1:1--1:12.

\bibitem{Ljungkvist:2017:TechRep}
{\sc K.~Ljungkvist and M.~Kronbichler}, {\em Multigrid for {M}atrix-{F}ree
  {F}inite {E}lement {C}omputations on {G}raphics {P}rocessors}, Tech. Rep.
  2017-006, Department of Information Technology, Uppsala University, 2017.

\bibitem{loffeld2017arithmetic}
{\sc J.~Loffeld and J.~Hittinger}, {\em On the arithmetic intensity of
  high-order finite-volume discretizations for hyperbolic systems of
  conservation laws}, The International Journal of High Performance Computing
  Applications,  (2017).

\bibitem{SuperMUCNG:Configuration:Details}
{\sc LRZ}, {\em {H}ardware of {SuperMUC-NG}}.
\newblock \url{https://doku.lrz.de/display/PUBLIC/Hardware+of+SuperMUC-NG}
  (retrieved on 25 February 2020).

\bibitem{SuperMUC:Configuration:Details}
\leavevmode\vrule height 2pt depth -1.6pt width 23pt, {\em {SuperMUC}
  {P}etascale {S}ystem}.
\newblock \url{https://www.lrz.de/services/compute/supermuc/systemdescription/}
  (retrieved on 29 November 2018).

\bibitem{May:2015:CMAME}
{\sc D.~A. May, J.~Brown, and L.~L. Pourhiet}, {\em A scalable, matrix-free
  multigrid preconditioner for finite element discretizations of heterogeneous
  {S}tokes flow}, Computer Methods in Applied Mechanics and Engineering, 290
  (2015), pp.~496--523.

\bibitem{ricard2007physics}
{\sc Y.~Ricard}, {\em Physics of mantle convection}, {Treatise} on
  {Geophysics}, 7 (2007), pp.~31--81.

\bibitem{waluga2006masscorrection}
{\sc U.~R{\"{u}}de, C.~Waluga, and B.~Wohlmuth}, {\em Mass-corrections for the
  conservative coupling of flow and transport on collocated meshes}, Journal of
  Computational Physics, 305 (2016), p.~319–332.

\bibitem{rudi2017weighted}
{\sc J.~Rudi, G.~Stadler, and O.~Ghattas}, {\em Weighted {BFBT} preconditioner
  for {S}tokes flow problems with highly heterogeneous viscosity}, SIAM Journal
  on Scientific Computing, 39 (2017), pp.~S272--S297.

\bibitem{stengel2015quantifying}
{\sc H.~Stengel, J.~Treibig, G.~Hager, and G.~Wellein}, {\em Quantifying
  performance bottlenecks of stencil computations using the
  execution-cache-memory model}, in Proceedings of the 29th ACM on
  International Conference on Supercomputing, ACM, 2015, pp.~207--216.

\bibitem{Rietbergen:1996:IJMME}
{\sc B.~van Rietbergen, H.~Weinans, R.~Huiskes, and B.~Polman}, {\em
  Computational strategies for iterative solutions of large {FEM} applications
  employing voxel data}, International Journal for Numerical Methods in
  Engineering, 39 (1996), pp.~2743--2767.

\bibitem{Williams2009}
{\sc S.~Williams, A.~Waterman, and D.~Patterson}, {\em {Roofline: An Insightful
  Visual Performance Model for Multicore Architectures}}, Commun. ACM, 52
  (2009), pp.~65--76.

\bibitem{zhang2003high}
{\sc Y.~Zhang, D.~Rodrigue, and A.~Ait-Kadi}, {\em High density polyethylene
  foams. ii. elastic modulus}, Journal of applied polymer science, 90 (2003),
  pp.~2120--2129.

\bibitem{zienkiewicz1992superconvergent}
{\sc O.~C. Zienkiewicz and J.~Z. Zhu}, {\em The superconvergent patch recovery
  and a posteriori error estimates. part 1: The recovery technique},
  International Journal for Numerical Methods in Engineering, 33 (1992),
  pp.~1331--1364.

\end{thebibliography}
\end{document}